\newcommand{\excise}[1]{}
\newcommand\numberthis{\addtocounter{equation}{1}\tag{\theequation}}
\newcommand\RR{\mathbb{R}}
\newcommand{\cov}{\mathrm{Cov}}
\newcommand{\PD}{\mathrm{PD}}
\newtheorem{theorem}{Theorem}
\newtheorem{definition}{Definition}
\newtheorem{lemma}{Lemma}
\newtheorem*{example*}{Example}
\newtheorem{corollary}{Corollary}
\newtheorem{remark}{Remark}
\DeclarePairedDelimiterX{\infdivx}[2]{(}{)}{%
	#1\;\delimsize\|\;#2%
}
\newcommand{\RNum}[1]{\uppercase\expandafter{\romannumeral #1\relax}}
\newcommand*{\addFileDependency}[1]{
	\typeout{(#1)}
	\@addtofilelist{#1}
	\IfFileExists{#1}{}{\typeout{No file #1.}}
}
\DeclareMathOperator\GP{GP}
\newif\ifhighlightchanges
\newcommand{\changedreviewerone}[1]{\ifhighlightchanges\textcolor{black}{#1}\else#1\fi}
\newcommand{\changedreviewertwo}[1]{\ifhighlightchanges\textcolor{black}{#1}\else#1\fi}
\newcommand{\changedreviewerthree}[1]{\ifhighlightchanges\textcolor{black}{#1}\else#1\fi}
\newcommand{\changedAE}[1]{\ifhighlightchanges\textcolor{black}{#1}\else#1\fi}
\newenvironment{review2_env}{\color{black}}{\color{black}}
\newenvironment{review3_env}{\color{black}}{\color{black}}
\begin{document}

	\title{\mbox{}\\[-11ex]Gaussian Processes for Time Series with Lead-Lag Effects with application to biology data}
	\author{Wancen Mu$^{1,*}$, Jiawen Chen$^{1,*}$, Eric S. Davis$^2$, Kathleen Reed$^3$,\\ Douglas Phanstiel$^{2,3,4}$, Michael I. Love$^{1,5}$, and Didong Li$^1$\\ 
		{\em Department of Biostatistics$^1$}\\
		{\em Curriculum in Bioinformatics and Computational Biology$^2$}\\
		{\em Curriculum in Genetics and Molecular Biology$^3$}\\
		{\em Department of Cell Biology and Physiology$^4$}\\
		{\em Department of Genetics$^5$, University of North Carolina at Chapel Hill} \\
		{\em Equal contribution$^*$}}
	\date{\vspace{-5ex}}	
	\maketitle

	\begin{abstract}
		Investigating the relationship, particularly the lead-lag effect, between time series is a common question across various disciplines, especially when uncovering biological process. However, analyzing time series presents several challenges. Firstly, due to technical reasons, the time points at which observations are made are not at uniform intervals. Secondly, some lead-lag effects are transient, necessitating time-lag estimation based on a limited number of time points. Thirdly, external factors also impact these time series, requiring a similarity metric to assess the lead-lag relationship. To counter these issues, we introduce a model grounded in the Gaussian process, affording the flexibility to estimate lead-lag effects for irregular time series. In addition, our method outputs dissimilarity scores, thereby broadening its applications to include tasks such as ranking or clustering multiple pair-wise time series when considering their strength of lead-lag effects with external factors. Crucially, we offer a series of theoretical proofs to substantiate the validity of our proposed kernels and the identifiability of kernel parameters. Our model demonstrates advances in various simulations and real-world applications, particularly in the study of dynamic chromatin interactions, compared to other leading methods. 
	\end{abstract}
	\section{Introduction}
	\label{sec:intro}
	The lead-lag effect, a widespread phenomenon where changes in one time series (the leading series) influence another time series (the influenced follower series) after a certain delay. This phenomenon is prevalent in a variety of fields, such as climate science \citep{environments8030018}, healthcare \citep{runge2019detecting, zhu2021evolution,feng2021two}, economics \citep{wang2017lead, skoura2019detection}\changedreviewerthree{, and finance~\citep{hoffmann2013estimation,bacry2013some,da2017correlation,ito2020direct}}. For example, \citet{harzallah1997observed} explores the lead-lag relationship between the Indian summer monsoon and various climate variables, such as sea surface temperature, snow cover, and geopotential height. However, this dynamic, rooted in causal regulatory relationships, is notably observed in biological processes, which serves as the primary inspiration for the proposed method. The lead-lag effect manifests itself in various processes, including development \citep{gerstein2010integrative, ding2020analysis, strober2019dynamic}, disease-associated genetic variation \citep{krijger2016regulation}, and responses to various interventions and treatments \citep{faryabi2008optimal, lu2021causal}. 
	
	To provide a clearer context for this concept, we consider temporal-omics datasets, particularly those used in the identification of potential enhancer-promoter pairs in various biological processes \citep{whalen2016enhancer, fulco2019activity, schoenfelder2019long,moore2020expanded}. 
	Enhancers are short cis-regulatory DNA sequences that are distal to the genes they regulate, as opposed to the promoter sequence that is adjacent to a particular gene. 
	Since enhancers and promoters coordinate to generate time- and context-specific gene expression programs, that is, the amount of RNA produced at what time, we hypothesize that the dynamic patterns of enhancer regulation are typically followed by corresponding gene program patterns, albeit with a certain time lag. 
	Approximately a million candidate human enhancers have been identified \citep{moore2020expanded}, 
	though the intricate dynamics of how and which enhancers interact with promoters in specific processes is still an area of active and ongoing research.

	Addressing this challenge, our study focuses on the multi-omics time series data presented by \citet{reed2022temporal}. This dataset includes measurements of enhancer activity, and promoter transcription, and 3D chromatin structure at seven time points, from 0 to 6 hours (a later time point is excluded from this analysis). These data reveal complex dynamic patterns, as depicted in \Cref{fig:introduction}. Our goal is to identify functional human enhancer-promoter pairs from over 20,000 candidate pairs, estimate the time lag, and quantify the extent of the lead-lag effect between enhancer activity and gene expression over time. \changedAE{The biological motivation for this goal is to uncover a relationship in genomics time series for a subset of enhancer-gene pairs, in particular those in which the enhancer state changes after cell activation and causes a subsequent change in expression of a target gene. Measuring the lag between enhancer activity and gene expression over time for these candidate enhancer-promoter pairs helps in understanding the sequence of regulatory events set off by activation, and how the time scale of gene regulation may vary across the genome.}
	
	\begin{figure}[h!]
		\centering
		\includegraphics[width=\textwidth]{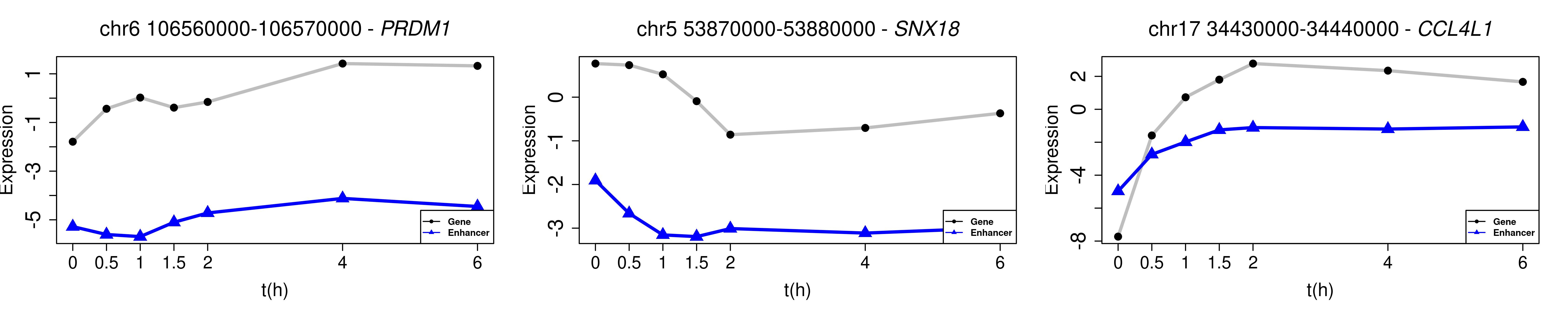}
		\caption{Examples of enhancer-promoter pairs with different dynamic patterns, The title for each subfigure represents the enhancer's genomic location  and the gene name.}
		\label{fig:introduction}
	\end{figure}

	
	In the context of our motivating application involving multi-omics time series data, and similar scenarios across various fields, there are key challenges that need careful consideration. Firstly, a common issue is the scarcity of time-point observations. This limitation is compounded by the irregular intervals between successive measurements, often due to the high cost of experimental design or sensor failures. Additionally, in cases where there is an abundance of time-point data, the focus may shift to pinpointing local time delays within smaller segments of the time series. These factors underscore the importance of accurately quantifying the uncertainty associated with the extent of the lead-lag relationship, particularly when dealing with limited temporal datasets.
	
	
	Existing methods for aligning time-lagged time series to explore the magnitude of lead-lag effects are numerous and varied. This exploration can extend to tasks such as ranking leaders or followers among multiple time series \citep{wu2010detecting} and performing network clustering of time series based on the pairwise lead-lag relationships \citep{bennett2022lead}. For example, \cite{wu2010detecting} ranks the leaders of the streams to the sea surface temperature. Widely used models such as time-lagged cross-correlation (TLCC,~\cite{shen2015analysis}) by comparing the cross-correlation coefficients, dynamic time warping (DTW, \cite{berndt1994using}), and its differentiable approximation, soft-DTW \citep{cuturi2017soft}, and soft-DTW divergence \citep{blondel2021differentiable} have proven useful by comparing distance loss among multiple pairwise time series after alignment~\citep{environments8030018}. TLCC, for instance, operates by sliding one signal over another and identifying the shift position with the highest correlation. However, these methods often require the same time intervals for each time series, which makes them less effective for irregular time series, potentially leading to distortions of the true underlying relationships. 
	
	Beyond quantifying the lead-lag effect between pairwise time series, there is often a desire to estimate time lag considering the irregularities in time series data, one approach is to first impute values at regular time points, then apply methods like TLCC to calculate the time lag. Common techniques for this separate modeling of each time series include using splines \citep{rehfeld2011comparison} or separate Gaussian processes (SGP, \cite{li2021multi}).
	However, such methods do not leverage shared information across time series, potentially leading to sub-optimal results. Hierarchical GP models such as Magma~\citep{leroy2022magma,leroy2023cluster} integrate shared information across multiple time series but are primarily designed for prediction and clustering individual time series. This is in contrast to our primary objective, which is to cluster pairwise time series by a measure of similarity. More direct approaches for irregular time series include the use of the negative and positive lead-lag estimator (NAPLES, \cite{ito2020naples}) and Lead-Lag method~\citep{hoffmann2013estimation}, which utilize the Hayashi–Yoshida covariance estimator to estimate a fixed time-lag. Alternatively, multi-task GPs through latent variable models introduce additional variables to account for temporal discrepancies such as GPLVM~\citep{durichen2014multitask}, which may limit flexibility due to the presumed independence between tasks and temporal sequences (also known as separable kernels). Moreover, more complex misalignment techniques \citep{kazlauskaite2019gaussian, mikheeva2022aligned} might not perform well with limited time observations for biology data where simpler linear mismatch assumptions suffice, and they can face operational challenges like Cholesky decomposition failures. \Cref{tab:methods} compares the aforementioned methods, highlighting their suitability for limited or irregular time points, their ability to quantify the magnitude of lead-lag effects, and estimate time lags:
	
	\begin{table}[!htbp]
		\centering
		\caption{An overview of related existing methods}
		\label{tab:tab1}
			\small 
			\begin{tabular}{|c|c|c|c|c|}
				\hline
				& \multicolumn{2}{c|}{Quantify the magnitude of lead-lag effect} & \multicolumn{2}{c|}{Estimate time lag}  \\ 
				\hline
				Model & \begin{tabular}[c]{@{}c@{}}Allows for limited\\ time points\end{tabular} & \begin{tabular}[c]{@{}c@{}}Allows for irregular\\ time points\end{tabular} & \begin{tabular}[c]{@{}c@{}}Allows for limited\\ time points\end{tabular} & \begin{tabular}[c]{@{}c@{}}Allows for irregular\\ time points\end{tabular} \\
				\hline
				TLCC  & X &         & X  &   \\
				\hline
				DTW      & X   &   & X  &    \\
				
				\hline
				Lead-Lag &   &         & X  & X  \\
				
				\hline
				SGP & X$^*$  & X$^*$       & X$^*$  & X$^*$   \\
				
				\hline
				Splines & X$^*$  & X$^*$  & X$^*$ & X$^*$  \\
				
				\hline
				GPLVM &   & X$^*$  &  & X$^*$  \\
				
				\hline
				MAGMA &  & X$^*$  &  &  X$^*$ \\
				
				\hline
				{\bf{GPlag}} & X & X & X & X\\
				\hline
			\end{tabular}\\
			{\small \emph{Note:} X$^*$ means yes if followed by TLCC or DTW.} 
			\label{tab:methods}
		\end{table}

		In response to these challenges and limitations of existing methods, we propose the \textbf{G}aussian \textbf{P}rocess for lead-\textbf{lag} time series (GPlag) model.
		GPlag introduces a novel class of GP kernels for time series with lead-lag effects, featuring interpretable parameters that estimate both the time lag and the degree of lead-lag effects. These parameters are crucial for tasks such as identifying associated factors, ranking or clustering time series based on lead-lag relationships. Our model also provides theoretical justification for the identifiability and interpretability of these key parameters, and integrates a Bayesian framework for greater flexibility in handling time series with constraints like limited observations or irregular time gaps.

		\section{Methods}\label{sec:model}
		\subsection{Gaussian Process}
		Gaussian processes are widely used for modelling dependency structures, such as spatial statistics and time series modelling. In generic terms, a Gaussian process models a random function where any finite collection of realizations (i.e., $n$ observations) follow a multivariate normal (MVN) distribution. 
		\begin{definition}
			$f:\Omega\to\RR$ follows a Gaussian process with mean function $\mu$ and covariance function $K$, denoted by $f\sim \mathrm{GP}(\mu,K)$, if for any $x_1,\cdots,x_n\in\Omega$,
			$[f(x_1),\cdots,f(x_n)]\sim N(v,\Sigma)$, where $v_i = \mu(x_i)$ and $\Sigma_{ij} = K(x_i, x_j)$.
		\end{definition}
		
		Covariance functions, also known as kernels, are widely studied when $\Omega\subset\RR^p$, such as the radial basis function (RBF, also known as the squared exponential kernel), the exponential kernel, and the Mat\'ern kernel~\citep{stein1999interpolation}. 
		
		\subsection{Pairwise Time Series Gaussian Process Model}\label{sec:pair}
		For the sake of clarity, we initiate pairwise time series models, and the extension to multiple time series is delayed to \Cref{sec:multi}. Given a pair of time series $Y(t)\coloneqq [Y_1(t),Y_2(t)]^\top$, where $t\in\RR$, we model $y$ as a multi-output Gaussian process: $Y=F(t)+\varepsilon$. Here, $F=[F_1,F_2]^\top\sim \GP(\mu,\widetilde{K})$ and $\varepsilon\sim N(0,\tau^2\mathrm{I}_2)$ represents the measurement error. The mean function is defined as $\mu$, and $\widetilde{K}:\RR\times\RR\to \PD(2)$ is a cross-covariance kernel, with $\PD(2)$ representing the space of all 2 by 2 positive definite matrices. $\mu$ is often assumed to be zero, which can be achieved by a vertical shift~\citep{gelfand2010handbook}, so we assume $\mu=0$ in the remaining sections.
		
		The cross-covariance kernel $\widetilde{K}$ is a crucial component of the model, expected to incorporate both the relationship between $Y_1$ and $Y_2$, and the time lag. Constructing valid cross-covariance kernels is a recognized challenge~\citep{gneiting2010matern,apanasovich2010cross,genton2015cross}, even more so when kernels must have certain identifiable and interpretable parameters in biological applications.
		
		We address this by transforming the problem to its ``dual" state, converting the multi-output GP into a single-output GP via $y=f(t,l)+\epsilon,~l=1,2$, where $f\sim N(0,K)$ is a single-output GP, $\epsilon\sim N(0,\tau^2)$, $F_1(t) = f(t,1)$ and $F_2(t)=f(t,2)$. Here, $K$ is a covariance kernel of a single-output GP, but the domain expands to $\RR\times\{1,2\}$ instead of $\RR$. Essentially, we transform a multi-output GP over $\RR$ into a single-output GP over $\RR\times \mathscr{C}$, where $\mathscr{C}=\{1,2\}$ is an index set of time series. This simplifies the problem by increasing the complexity of the domain."
		
		In this newly formed domain $\RR\times\mathscr{C}$, a family of inseparable, semi-stationary kernel has been \changedreviewertwo{developed} by \citep{li2021multi}, i.e., $K((t,l),(t',l'))=K((t+h,l),(t'+h,l'))$ for any $h\in\RR$. Consequently, $K$ can be reformulated as a function on $\RR\times\mathscr{C}\times \mathscr{C}$ as $K'(t,l,l')\coloneqq K((t,l),(0,l'))$. However, for simplicity, we will denote both as $K$ in the subsequent sections. A rich family of semi-stationary kernels can be found in \citep{li2021multi}. This family of kernels, although rich, was not specifically designed for time series with a time lag. To account for the time lag, the following theorem ensures that any existing semi-stationary kernel can induce a kernel with a parameter that measures the time lag, denoted by $s$, which \changedAE{ensures that our proposed kernel, with the inclusion of this time lag parameter, remains a valid kernel within the GP framework.}
		
		\begin{theorem}\label{thm:kernels}
			Let $K'$ be a semi-stationary kernel on $\RR\times \mathscr{C}$, then the induced kernel defined as
			\begin{equation}\label{eqn:kernels}
				K((t,l),(t',l'))\coloneqq K'(t-t'-\mathrm{\bf 1}_{\{l\neq l'\}}s,l,l')
			\end{equation}
			is positive definite, and thus serves as a covariance kernel for a GP, where $s\in\RR$ measures the time lag between two time series.
		\end{theorem}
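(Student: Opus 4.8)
The plan is to recognize the induced kernel $K$ as a \emph{pullback} of the given semi-stationary kernel under a fixed translation of the second time series, and then to use the elementary fact that pullbacks of positive definite kernels are again positive definite.

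First I would make the hypothesis precise: being a semi-stationary kernel on $\RR\times\mathscr{C}$ means there is a bona fide (symmetric, positive definite) covariance kernel $K_0$ on $(\RR\times\mathscr{C})\times(\RR\times\mathscr{C})$ with $K_0((t,l),(t',l'))=K_0((t+h,l),(t'+h,l'))$ for all $h\in\RR$; equivalently $K_0((t,l),(t',l'))=K'(t-t',l,l')$ for the reduced function $K'$ from the text. Note that symmetry of $K_0$ is equivalent to the relation $K'(\tau,l,l')=K'(-\tau,l',l)$, which I will need in order to check that the new object is itself symmetric.

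Next I would introduce the shift map $\phi:\RR\times\mathscr{C}\to\RR\times\mathscr{C}$ that fixes the first series and slides the time axis of the second series by $s$, i.e. $\phi(t,1)=(t,1)$ and $\phi(t,2)=(t+s,2)$, and verify by a short case analysis over $l,l'\in\{1,2\}$ that
\[
K\big((t,l),(t',l')\big)=K_0\big(\phi(t,l),\phi(t',l')\big).
\]
When $l=l'$ the two translations cancel and we recover $K'(t-t',l,l')$; when $l\neq l'$ exactly one argument is translated, which produces precisely the $\mathbf{1}_{\{l\neq l'\}}s$ correction (with the sign pinned down, consistently with the symmetry relation above, by which of $l,l'$ equals $2$). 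This bookkeeping — the mixed-index case together with the check that $K$ comes out symmetric — is the only step requiring any care; the rest is formal.

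Finally I would conclude using the pullback principle: for arbitrary nodes $(t_1,l_1),\dots,(t_n,l_n)$ and coefficients $c_1,\dots,c_n\in\RR$,
\[
\sum_{i,j}c_ic_j\,K\big((t_i,l_i),(t_j,l_j)\big)=\sum_{i,j}c_ic_j\,K_0\big(\phi(t_i,l_i),\phi(t_j,l_j)\big)\ge 0,
\]
since the right-hand side is just the positive definite quadratic form of $K_0$ evaluated at the image points $\phi(t_i,l_i)$, and symmetry of $K$ is inherited from $K_0$ via the identity above. Hence $K$ is a symmetric positive definite kernel and defines a GP on $\RR\times\mathscr{C}$, and since $\phi$ realizes $s$ as the amount by which the time axis of the second series is shifted relative to the first, $s$ is exactly the lead-lag parameter. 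I do not expect a genuine obstacle here; the only alternative route — verifying positive definiteness directly from the definition without the pullback language — amounts to the same computation but obscures why $s$ has its stated meaning, so I would favor the map-composition argument.
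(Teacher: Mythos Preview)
Your proposal is correct and is essentially the same argument as the paper's: your shift map $\phi(t,l)=(t+s_l,l)$ with $s_1=0$, $s_2=s$ is exactly what the paper calls the passage to ``fake'' observations $\widetilde{t}_i\coloneqq t_i+s_{c_i}$, and both then reduce positive definiteness of $K$ to positive definiteness of $K'$ at the image points. Your explicit attention to the sign in the mixed case and to symmetry is a useful refinement, since the paper's general lemma actually uses the shift $s_l-s_{l'}$ (which is $\pm s$ depending on the order) rather than the uniform $-\mathbf{1}_{\{l\neq l'\}}s$ of the theorem statement.
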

		
		The following corollary provides some concrete candidates of kernels for GPlag:
		\begin{corollary}\label{cly:kernel_family}
			If $\phi: \RR_{\geq0}\to\RR$ is a completely monotone function and $\psi:\RR_{\geq0}\to\RR_{\geq0}$ is a non-negative function with a completely monotone derivative, then 
			$$K((t,l),(t',l'))=\frac{\sigma^2}{\psi(\mathrm{\bf 1}_{\{l\neq l'\}})^{1/2}}\phi\left(-\frac{|t-t'-\mathrm{\bf 1}_{\{l\neq l'\}}s|^2}{\psi(\mathrm{\bf 1}_{\{l\neq l'\}})}\right)$$
			is a valid kernel, where $\sigma^2>0$ measures the spatial variance, $s\in\RR$ measures the time lag.
		\end{corollary}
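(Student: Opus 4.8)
The plan is to deduce \Cref{cly:kernel_family} from \Cref{thm:kernels} by first isolating a single semi-stationary kernel $K'$ on $\RR\times\mathscr{C}$ whose lag-induced version is exactly the displayed formula, and then verifying that $K'$ is positive definite. Set
\[
K'(u,l,l')\;\coloneqq\;\frac{\sigma^2}{\psi(\mathbf{1}_{\{l\neq l'\}})^{1/2}}\,\phi\!\left(-\frac{|u|^2}{\psi(\mathbf{1}_{\{l\neq l'\}})}\right),\qquad u\in\RR,\ l,l'\in\mathscr{C}.
\]
Because $K'$ depends on the time coordinates only through $u$, it is automatically semi-stationary, and substituting $u=t-t'-\mathbf{1}_{\{l\neq l'\}}s$ reproduces verbatim the expression in \Cref{cly:kernel_family}. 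Hence, once $K'$ is shown to be a valid kernel on $\RR\times\mathscr{C}$, \Cref{thm:kernels} immediately yields the corollary, and no new argument concerning the lag parameter $s$ is needed.

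To prove positive definiteness of $K'$, I would embed the (two-point) index set into the reals: let $\iota:\mathscr{C}\to\RR$ with $\iota(1)=0$ and $\iota(2)=1$, so that $|\iota(l)-\iota(l')|^2=\mathbf{1}_{\{l\neq l'\}}$, and write $K'(u,l,l')=C\big(u,\iota(l)-\iota(l')\big)$ where
\[
C(h,w)\;\coloneqq\;\frac{\sigma^2}{\psi(|w|^2)^{1/2}}\,\phi\!\left(-\frac{|h|^2}{\psi(|w|^2)}\right),\qquad (h,w)\in\RR\times\RR,
\]
with $h$ playing the role of the one-dimensional ``spatial'' lag (here, time) and $w$ that of the latent lag. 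Since $\phi$ is completely monotone and $\psi$ is non-negative with completely monotone derivative (i.e.\ a Bernstein function), $C$ is a valid stationary covariance on $\RR^{1}\times\RR^{1}$: this is the $d=1$ instance of the latent-dimension space--time construction \citep{apanasovich2010cross} and belongs to the semi-stationary kernel family of \citep{li2021multi}. Its proof rests on the scale-mixture (Bernstein--Widder, Schoenberg) representation of completely monotone functions together with the Bernstein property of $\psi$, and it is there that the normalizing exponent $d/2=\tfrac12$ appears. Restricting the positive-definite kernel $C$ to the subset $\RR\times\iota(\mathscr{C})\subset\RR^2$ — that is, keeping only the latent values $0$ and $1$ — leaves exactly $K'$, which is therefore positive definite on $\RR\times\mathscr{C}$.

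Most of this is bookkeeping: the embedding and the restriction are routine, since positive definiteness is inherited by any finite sub-collection of evaluation points. The one genuine ingredient is the validity of $C$, which I would quote rather than re-derive, so the main ``obstacle'' is really applying the template with the time axis in the role of the $d=1$ spatial coordinate, so that the exponent $1/2$ (not $1$) is correct. I would also flag one implicit hypothesis not spelled out in the statement: the diagonal case $l=l'$ involves $\psi(0)$, so the construction requires $\psi(0)>0$ (satisfied by the standard generators, e.g.\ $\psi(t)=(a+t^{\beta})^{\gamma}$ with $a>0$); with $\psi(0)>0$ and $\psi$ non-negative with completely monotone derivative, every $\psi$-evaluation in the formula is strictly positive and all divisions are legitimate, so $C$, hence $K'$, hence the kernel of \Cref{cly:kernel_family}, is a bona fide covariance function.
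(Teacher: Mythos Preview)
Your proposal is correct and takes essentially the same approach as the paper: both deduce the corollary from \Cref{thm:kernels} applied to the Gneiting space--time kernel, with positive definiteness of the underlying semi-stationary $K'$ quoted from \cite{gneiting2002nonseparable} (equivalently, the latent-dimension formulation of \cite{apanasovich2010cross,li2021multi}). Your version is more explicit about the embedding $\iota:\mathscr{C}\hookrightarrow\RR$, the reason the normalizing exponent is $d/2=\tfrac12$, and the implicit requirement $\psi(0)>0$, but the strategy is identical to the paper's.
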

		The proof is given in \Cref{apdx:mtsgp}. Since the key idea in the above theorem and corollary is motivated by \cite{gneiting2002nonseparable}, we call the kernel family constructed from \Cref{cly:kernel_family} the Gneiting family. In particular,  Gneiting enumerates four candidates for $\phi$ and three for $\psi$ in \cite{gneiting2002nonseparable}. Exploring each amalgamation of $\phi$ and $\psi$ families allows for further tailoring by adjusting parameters intrinsic to the family, culminating in the desired kernel. Furthermore, more kernels can be constructed by combining kernels from Gneiting family and other simple kernels. Analogues of the RBF, exponential kernel, and Mat\'ern kernel in Euclidean space are provided in the following definitions. 
		
		\begin{definition}\label{def:LRBF}
			The following covariance kernel is called the time lag radial basis function (LRBF):
			\begin{equation}\label{eqn:LRBF}
				K((t,l),(t',l'))=\frac{\sigma^2}{
					(\mathrm{\bf 1}_{\{l\neq l'\}}a^2+1)^{1/2}}e^{-\frac{b\left(t-t'-\mathrm{\bf 1}_{\{l\neq l'\}}s\right)^2}{\mathrm{\bf 1}_{\{l\neq l'\}}a^2+1}},
			\end{equation}
		\end{definition}
		
		\begin{definition}\label{def:LExp}
			The following covariance kernel is called the time lag exponential kernel (LExp):
			\begin{equation}\label{eqn:LExp}
				K((t,l),(t',l'))=\frac{\sigma^2}{
					\mathrm{\bf 1}_{\{l\neq l'\}}a^2+1}e^{-b\left|t-t'-\mathrm{\bf 1}_{\{l\neq l'\}}s\right|},
			\end{equation}
		\end{definition}
		
		\begin{definition}\label{def:LMat}
			The following covariance kernel is called the time lag Mat\'ern kernel (LMat) with smoothness $\nu$:
			\begin{equation}\label{eqn:LMat}
				K((t,l),(t',l'))=\frac{\sigma^22\left\{\frac{b}{2}|t-t'-\mathrm{\bf 1}_{\{l\neq l'\}}s|\right\}^\nu }{(\mathrm{\bf 1}_{\{l\neq l'\}}a^2+1)^{\nu+1/2}\Gamma(\nu)}K_\nu\left(b|t-t'-\mathrm{\bf 1}_{\{l\neq l'\}}s|\right),
			\end{equation}
		\end{definition}
		
		
		As an illustration, LRBF can be derived by setting $\phi(t)=e^{-bt}$ and $\psi(t)=a^2t+1$ in \Cref{cly:kernel_family}; LExp is the product of a kernel in Gneiting family with $\phi(t)=e^{-bt^{1/2}}$, $\psi(t)\equiv 1$, and a simple kernel $\frac{1}{\mathrm{\bf 1}_{\{l\neq l'\}}a^2+1}$; LMat is the product of a kernel in Gneiting family with $\phi(t)=\frac{(bt^{1/2})^\nu}{2^{\nu-1}\Gamma(\nu)}K_\nu(bt^{1/2})$, $\psi\equiv 1$, and a simple kernel $\frac{1}{(\mathrm{\bf 1}_{\{l\neq l'\}}a^2+1)^{\nu+1/2}}$. \changedreviewertwo{Common choices for $\nu$ are $1/2,3/2,5/2$, where the Bessel function $K_\nu$ admits a closed form, similar to the standard Mat\'ern kernel, with the specific choice depending on the data. }
		Additional candidate kernels are presented in the Appendix A \Cref{apdx:kernels}, along with a detailed explanation on how they are constructed.
		
		\subsection{Interpretation of parameters}
		The interpretability of kernel parameters is one of the major advantages of GP models. In the kernels defined above, $\sigma^2$, the spatial variance, controls the point-wise variance; $b$, the lengthscale parameter, governs the temporal dependency; the smoothness parameter $\nu$ determines the smoothness of the sample path. In this paper, we assume $\nu$ to be given, which is a common assumption in GP literature~\citep{rasmussen2004gaussian}. More importantly, $a$, is called the dissimilarity parameter, where a smaller $a$ signifies a lesser degree of dissimilarity, which corresponds to a stronger lead-lag effect. For two distinct time series $l\neq l'$, when $a=\infty$, the covariance $K((t,l),(t',l'))=0$, indicating independence between time series, or no lead-lag effect at all. On the other hand, when $a=0$, $K((t,1),(t',2))=\sigma^2 e^{-b (t-t'-s)^2}$, implying that the two time series are the same up to a time lag, the strongest lead-lag effect. The interpretation of $a$ in GPlag parallels measures lead-lag effect in other classical methods like DTW and TLCC: $a = 0$ in GPlag aligns with a loss of 0 in DTW and a correlation of 1 in TLCC to signify the strongest lead-lag effect. Similarly, $a = \infty$ in GPlag corresponds to a loss of infinity in DTW and a correlation of 0 in TLCC for the weakest effect.  The time lag parameter $s$ measures the time lag between two time series. In the next subsection, we discuss the implementation of GPlag and the inference of these parameters, especially $a$ and $s$. \changedreviewerthree{Further visualization of sample paths of above kernels with different combinations of parameters are provided in \Cref{fig:sampp_exp}-\Cref{fig:sampp_matern}.} Theoretical support for the identifiability of $a$ and $s$ is provided in \Cref{sec:theory}.

		\subsection{Implementation}
		In this section, we outline the algorithm to estimate the parameters $\{a,b, s,\sigma^2,\tau^2\}$ proposed in previous model, given the observations $\{(t_i,c_i, y_i)\}_{i=1}^{n}, c_i\in \{1, 2\}, n = n_1+n_2$ . The most direct approach is to find the maximum likelihood estimator (MLE) given by:
		\begin{align}
			l(a,b,s,\sigma^2,\tau^2)&\coloneqq \log~p(y_1,\cdots,y_{n}|c_1,\cdots, c_n, t_1,\cdots,t_{n},a,b,s,\sigma^2,\tau^2)\nonumber \\
			&= - \frac{n}{2}\log2\pi -\frac{1}{2}\log|\Sigma+\tau^2\mathrm{I}_{n}| - \frac{Y^\top (\Sigma+\tau^2\mathrm{I}_{n})^{-1}Y}{2},\label{eqn:mll}
		\end{align}
		
		where $Y=[y_1,\cdots,y_{n}]^\top$ denotes the vector of observations, and $\Sigma$ is the $(n_1+n_2)\times (n_1+n_2)$ covariance matrix specified in \Cref{eqn:LRBF,eqn:LExp,eqn:LMat}.
		Optimization methods such as L-BFGS-B~\citep{byrd1995limited} and Adam ~\citep{kingma2014adam} are widely used. The L-BFGS-B algorithm allows for setting lower and upper bounds on the parameters, enhancing the stability of the numeric optimization. The parameters $a,~b,~\sigma^2,~\tau^2$ are constrained to be positive and $s$ can be set to a reasonable range depending on specific cases. For the initial values of $b,~\sigma^2,~\tau^2$, we suggest running a standard GP regression first and using the estimates as our initial values. For $s$, we recommend taking the estimate from TLCC, and we recommend $a=1$ as the initialization. The illustration of MLE process is provided in Appendix A \Cref{alg:GPlag}.
		
		The extension to Bayesian inference is straightforward: we can simply sample from the posterior $\pi(a,b,s,\sigma^2,\tau^2|\{t_i,c_i, y_i\}_{i=1}^{n})\propto \pi(a,b, s,\sigma^2,\tau^2)p(Y|\{y_i, c_i,t_i\}_{i=1}^{n}, a,b,s,\sigma^2,\tau^2)$, where $\pi(a,b, s,\sigma^2,\tau^2)$ is the joint prior distribution. We suggest using independent inverse Gamma priors for $a,~b,~\sigma^2,~\tau^2$ and a Gaussian prior for $s$. \changedreviewerthree{The algorithm is depicted in \Cref{alg:GPlag-bayesian} in \Cref{apdx:algorithm} and has been implemented in Python using GPyTorch~\citep{gardner2018gpytorch} and the Pyro~\citep{bingham2019pyro} package.} This implementation enables rapid, fully Bayesian inference and takes advantage of GPU acceleration.



		\subsection{Extension to multi-time series}\label{sec:multi}
		The kernels introduced in Section \ref{sec:pair} can be extended to accommodate any number of time series, denoted by $L\geq 2$. Following the same rationale, it suffices to define a kernel on $\RR\times \mathscr{C}\times \mathscr{C}$ where $\mathscr{C}=\{1,2,\cdots,L\}$. The following theorem generalizes LRBF, LExp and LMat for any $L$ by aligning \changedreviewertwo{$A=(a_{ll'})$} with the criteria for a distance metric, parameterized by $a_{ll'}\geq 0,b,s_l,\sigma^2,\tau^2$ under the constraints $a_{ll'}=0\Longleftrightarrow l=l',~a_{ll'}=a_{l'l},~a_{ll'}+a_{l'k}\geq a_{lk}$ for any $l,l',k=1,\cdots,L$ and $s_1=0$:
		\begin{theorem}\label{thm:multiGPlag}
			The following covariance functions are extensions of LRBF, LExp and LMat:
			\begin{align*}
				K((t,l),(t',l'))&=\frac{\sigma^2}{
					(a_{ll'}^2+1)^{1/2}}e^{-\frac{b\left(t-t'+s_l-s_{l'}\right)^2}{a_{ll'}^2+1}},\\
				K((t,l),(t',l'))&=\frac{\sigma^2}{
					(a_{ll'}^2+1)}e^{-b\left|t-t'+s_l-s_{l'}\right|},\\
				K((t,l),(t',l'))&=\frac{\sigma^22\left\{\frac{b}{2}|t-t'+s_l-s_{l'}|\right\}^\nu }{(a_{ll'}^2+1)^{\nu+1/2}\Gamma(\nu)}K_\nu\left(b|t-t'+s_l-s_{l'}|\right).
			\end{align*}
		\end{theorem}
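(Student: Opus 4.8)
The plan is to strip off the two pieces of extra structure in turn --- the lags $s_l$, and then the index-set ``metric'' $(a_{ll'})$ --- reducing the theorem to a single positive-semidefiniteness statement about the $L\times L$ matrix $A:=\big[a_{ll'}^2\big]$. First, to remove the lags, define $g:\RR\times\mathscr C\to\RR\times\mathscr C$ by $g(t,l)=(t+s_l,l)$. Each of the three candidate kernels equals $\bar K\big(g(t,l),g(t',l')\big)$, where $\bar K$ is the same formula with every $s_l$ set to $0$ (hence stationary in its scalar argument), because $\bar K\big(g(t,l),g(t',l')\big)$ depends on the pair only through $t-t'+s_l-s_{l'}$. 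Pulling a positive-definite kernel back along an arbitrary map preserves positive definiteness, so it suffices to show each $\bar K$ is a valid kernel on $\RR\times\mathscr C$; the normalization $s_1=0$ is immaterial, since translating all the $s_l$ by a common constant leaves every $s_l-s_{l'}$ unchanged.

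\emph{Reducing to $A$.} For LRBF I would use the Gaussian Fourier identity $\frac{1}{(1+x)^{1/2}}e^{-b r^2/(1+x)}=c_b\int_{\RR}e^{i\omega r}\,e^{-\omega^2/(4b)}\,e^{-x\omega^2/(4b)}\,d\omega$ with $x=a_{ll'}^2$ and $r=t-t'$: for each fixed $\omega$ the integrand is the Hadamard product of the rank-one positive-definite kernel $e^{i\omega(t-t')}$, a positive scalar, and the index-set kernel $e^{-\lambda a_{ll'}^2}$ with $\lambda=\omega^2/(4b)$, so integrating over $\omega$ against Lebesgue measure shows $\bar K$ is positive-definite as soon as $e^{-\lambda a_{ll'}^2}$ is positive semidefinite for every $\lambda>0$. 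For LExp and LMat I would instead factor $\bar K$ into the product of a covariance depending on $t$ only --- the scalar exponential kernel, respectively a rescaled scalar Mat\'ern kernel, both classical valid covariances, lifted to $\RR\times\mathscr C$ through the projection $(t,l)\mapsto t$ --- times the index-set kernel $(1+a_{ll'}^2)^{-p}$ with $p=1$, respectively $p=\nu+\tfrac12$; since $(1+x)^{-p}=\tfrac{1}{\Gamma(p)}\int_0^\infty u^{p-1}e^{-u}e^{-xu}\,du$, this factor is again positive semidefinite once $e^{-\lambda a_{ll'}^2}$ is, and a Hadamard product of positive-definite kernels is positive-definite. This is the same machinery as in the proof of \Cref{cly:kernel_family}, with $\mathbf{1}_{\{l\ne l'\}}$ replaced throughout by $a_{ll'}^2$; one could equivalently route LRBF through the general Gneiting-family statement by noting $\psi(x)=x+1$ is a Bernstein function.

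\emph{The crux.} By Schoenberg's theorem, $e^{-\lambda a_{ll'}^2}$ is positive semidefinite for all $\lambda>0$ if and only if $A=[a_{ll'}^2]$ is conditionally negative definite, equivalently $(\{1,\dots,L\},a)$ embeds isometrically into a Hilbert space. Establishing this from the constraints on $(a_{ll'})$ is the one non-routine point and the main obstacle; everything else is bookkeeping (matching each kernel to the integral representations above, and recording that $e^{-r}$ and $(1+x)^{-p}$ are completely monotone). I would flag one caveat: the triangle inequality $a_{ll'}+a_{l'k}\ge a_{lk}$ alone does not force negative type once $L\ge4$ --- the path metric of a $4$-cycle is a metric whose squared-distance matrix fails to be conditionally negative definite --- so the hypothesis that actually makes the argument go through is that $a$ be a metric \emph{of negative type}. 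This is automatic when $L\le3$, and whenever $a_{ll'}=\lVert z_l-z_{l'}\rVert$ for latent coordinates $z_1,\dots,z_L$, which is the regime of interest; under that hypothesis the crux is immediate and the theorem follows from the first two steps.
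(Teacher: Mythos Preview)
Your lag-removal step coincides with the paper's: it defines ``fake'' observations $\widetilde t_i=t_i+s_{c_i}$ and observes that the induced Gram matrix equals that of the lag-free kernel $K'$ evaluated at those points --- precisely your pullback along $g(t,l)=(t+s_l,l)$. Beyond that the paper simply lists the three lag-free $K'$ and stops, leaning on \cite{li2021multi} and the Gneiting construction behind \Cref{cly:kernel_family} for their validity, without spelling out your reduction (via the Gaussian Fourier identity for LRBF and the Laplace representation $(1+x)^{-p}=\Gamma(p)^{-1}\int_0^\infty u^{p-1}e^{-u(1+x)}\,du$ for LExp/LMat) to the single requirement that $[a_{ll'}^2]$ be conditionally negative definite. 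Your route makes explicit exactly where the index-set hypotheses enter, and your caveat is on point: the paper states only the metric axioms on $(a_{ll'})$ and asserts these ``ensure the positive definiteness of $K$,'' but what the Gneiting machinery (and your integral representations) actually need is negative type of $a$, i.e.\ CND of $[a_{ll'}^2]$; your $4$-cycle example correctly shows the two diverge once $L\ge4$. For $L\le3$ --- the only multi-series case exercised in the paper --- the metric axioms do force negative type, so nothing in the paper's applications is affected; your sharper hypothesis is the right one to record in general.
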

		The positive definite, symmetric, and triangle inequality constraints on $a_{ll'}$ ensures the positive definiteness of $K$, while the constraint on $s_1$ is due to our treating the first time series as the baseline, meaning that the time lag is relative to the first time series. The inference is similar to the previous case, with the only difference being that the optimization method involved in MLE needs to consider these additional constraints. In other words, it is a constrained optimization problem. Nevertheless, it is not intractable, as all the constraints on $a_{ll'}$ are linear and can be accommodated by existing optimization methods, including L-BFGS-B and Adam, as illustrated in Appendix A \Cref{alg:GPlag-multi} with three time series.
		
		\section{Identifiability of GPlag Kernel Parameter}\label{sec:theory}
		
		To support our interpretation to $A\coloneqq(a_{ll'})\in\RR^{L\times L}$ and $S\coloneqq (s_l)\in\RR^{L}$ in Section \ref{sec:model}, we discuss the identifiability, i.e., different parameters lead to different models. Note that not all parameters of commonly used GP models are identifiable, which means that different parameter values can lead to the same model. For example, in the classical Mat\'ern covariance function, the spatial variance $\sigma^2$ and lengthscale parameter $b$ are not identifiable, meaning that it is not possible to distinguish between different values of these parameters based on the data alone~\citep{stein1999interpolation,zhang2004inconsistent,kaufman2013role,li2021inference,li2022bayesian}. This can be a limitation of GP models and it is important to consider when interpreting the results. In the case of GPlag, if the key parameters $A$ and $S$ are not identifiable, i.e., the GPlag determined by $(A,S)$ is equivalent to the one determined by $(\widetilde{A},\widetilde{S})$, then it is unconvincing to interpret $A$ and $S$ as times series dissimilarity and time lag. Without identifiability, $A$ and $S$ do not admit any clear interpretation and the results of the model would be hard to interpret. 
		
		\changedAE{In this section, we prove that the parameters $A$ and $S$ in GPlag model are identifiable, allowing them to be used to as measures of the lead lag effect between time series (measured by $A$) and the time lag (measured by $S$).} \changedAE{Here, we assume all parameters are positive, finite, real numbers, and the domain is fixed (also known as infill domain).} \changedAE{The fixed domain assumption is reasonable for multi-omics time series, as the dynamics of gene regulation in such experiments occur during a fixed window of time. Investigators will decide, based on available budget, the time points needed to capture the dynamics during, say activation, or differentiation. If allocated more budget, they will then fill in the gaps between existing observations, to have better resolution over the same time period.} We begin by introducing the necessary definitions.
		
		\begin{definition}
			Two Gaussian processes $K$ and $\widetilde{K}$ are said to be equivalent, denoted by $K\equiv \widetilde{K}$ , if the corresponding Gaussian random measures are equivalent to each other. That is, two Gaussian random measures are absolutely continuous with respect to each other. 
		\end{definition}
		
		If $K\equiv \widetilde{K}$, then $K$ can never be correctly distinguished from $\widetilde{K}$ regardless how dense the observed time points are~\citep{zhang2004inconsistent}. Focusing on a parametric family of covariance functions $K_\theta$, if there exists $\theta_1\neq \theta_2$ such that $K_{\theta_1}\equiv K_{\theta_2}$, then any estimator of $\theta$ based on $n$ observations $\{t_i,y_i\}_{i=1}^n$, denoted by $\widehat{\theta}_n$, cannot be weakly consistent \citep{dudley1989real}. 
		
		\begin{definition}
			$\theta$ is said to be identifiable if $K_\theta\equiv K_{\widetilde{\theta}}\Longleftrightarrow \theta=\widetilde{\theta}$. 
		\end{definition}
		
		\changedreviewerthree{As a result, it suffices to show the following necessary and sufficient conditions for two GPlags being equivalent, which suggests identifiable parameters for further interpretation:}
		
	
	\begin{theorem}\label{thm:iden}
		Let $K$ and $\widetilde{K}$ be two LMat kernels with parameters $(\sigma^2,b,A,S)$ and $(\widetilde{\sigma}^2,\widetilde{b},\widetilde{A},\widetilde{S})$ and $\nu$ is assumed to be known, then 
		\begin{equation*}
			K\equiv \widetilde{K} \Longleftrightarrow (\sigma^2b^{2\nu},A,S)=(\widetilde{\sigma}^2\widetilde{b}^{2\nu},\widetilde{A},\widetilde{S}).
		\end{equation*}
		That is, the identifiable parameters, also known as the microergodic parameters, are $\{\sigma^2b^{2\nu}, A, S\}$. In particular, let $\nu=\frac{1}{2}$, the microergodic parameters for LExp kernel are $\{\sigma^2b,A,S\}$.
	\end{theorem}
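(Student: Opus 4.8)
The plan is to reduce the equivalence of two LMat GPlag kernels on $\RR\times\mathscr{C}$ to a collection of classical equivalence-of-Gaussian-measures statements for ordinary Mat\'ern processes on $\RR$, for which the microergodic parameter $\sigma^2 b^{2\nu}$ is well known (Zhang, Stein). First I would unpack the single-output-over-$\RR\times\mathscr{C}$ formulation: a realization of the process consists of $L$ coordinate processes $f(\cdot,l)$, each of which is, by inspection of the LMat formula with $l=l'$ (so $\mathbf 1_{\{l\neq l'\}}=0$), an ordinary Mat\'ern-$\nu$ process on $\RR$ with variance $\sigma^2$ and lengthscale parameter $b$. The cross-covariances between coordinates $l\neq l'$ are governed by the shifts $s_l-s_{l'}$ and the scalings $(a_{ll'}^2+1)$.

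The ``$\Leftarrow$'' direction is the easy one: if $(\sigma^2b^{2\nu},A,S)=(\widetilde\sigma^2\widetilde b^{2\nu},\widetilde A,\widetilde S)$, I would show $K\equiv\widetilde K$ by combining two observations. Matching $A$ and $S$ means the two kernels differ only through the pair $(\sigma^2,b)$ vs.\ $(\widetilde\sigma^2,\widetilde b)$ with $\sigma^2 b^{2\nu}=\widetilde\sigma^2\widetilde b^{2\nu}$; on each coordinate this is exactly Zhang's theorem that Mat\'ern measures with a common microergodic parameter are equivalent (on a bounded sampling region, or more precisely for the induced measures on any fixed finite/compact index set, which is all that equivalence of Gaussian measures requires). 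Since the time-lag and dissimilarity structure is identical, the joint law over $\RR\times\mathscr{C}$ inherits equivalence: one can either invoke Feldman–H\'ajek by bounding the relevant Hilbert–Schmidt / trace norms of the difference of covariance operators, or reduce to the single-coordinate case by a change of variables absorbing the shifts $s_l$ and the constant scalings.

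For the ``$\Rightarrow$'' direction I would argue by contraposition: suppose the tuples differ, and produce a function(al) of the parameters that is preserved under equivalence but takes different values. Equivalence of the joint measures forces equivalence of every marginal; restricting to coordinate $l$ gives equivalence of two Mat\'ern processes on $\RR$, which by the classical result forces $\sigma^2 b^{2\nu}=\widetilde\sigma^2\widetilde b^{2\nu}$. It remains to extract $A$ and $S$. For $S$: the shift $s_l-s_{l'}$ appears inside $|t-t'+s_l-s_{l'}|$, so if $s_l-s_{l'}\neq \widetilde s_l-\widetilde s_{l'}$ the cross-covariance function $K((\cdot,l),(\cdot,l'))$ and $\widetilde K((\cdot,l),(\cdot,l'))$ have kinks / are non-smooth at different points, and a mismatch of the mean-square behavior of $f(t+\delta,l)-\widetilde{\phantom f}$-type increments near the singular point contradicts equivalence; the normalization $s_1=0$ then pins down each $s_l$ individually. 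For $A$: once $\sigma^2 b^{2\nu}$ and $S$ are matched, the ratio $K((t,l),(t',l'))/\sqrt{K((t,l),(t,l))K((t',l'),(t',l'))}$ evaluated at the time lag $t-t'=s_{l'}-s_l$ equals $(a_{ll'}^2+1)^{-\nu-1/2}/1$ times a fixed constant (for LMat one must be slightly careful with the $\{b|\cdot|\}^\nu K_\nu(b|\cdot|)$ factor evaluated at argument $0$, which is a finite nonzero constant), so distinct $a_{ll'}$ give a genuinely different correlation structure; the cleanest route is to note that matching all pairwise joint laws forces matching of the full $2\times2$ (or $L\times L$) covariance ``profile'' along the aligned direction, hence $A=\widetilde A$, and to cite the Feldman–H\'ajek dichotomy to conclude that a strict mismatch of covariance operators that is \emph{not} Hilbert–Schmidt-small yields orthogonal (hence inequivalent) measures. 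The LExp statement is the special case $\nu=\tfrac12$, giving microergodic parameter $\sigma^2 b$.

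The main obstacle I anticipate is the $A$-identifiability step: showing that two distinct symmetric, triangle-inequality-satisfying matrices $A\neq\widetilde A$ really do yield \emph{inequivalent} (not merely distinct) Gaussian measures, since a bounded multiplicative change in a covariance kernel can sometimes still give equivalent measures. I expect to handle this by exploiting that the dissimilarity enters as a \emph{global} constant rescaling of an entire cross-covariance block (it does not decay), so the difference of covariance operators on the relevant $L^2$ space is not compact, putting us on the ``orthogonal'' side of Feldman–H\'ajek; making that operator-theoretic statement precise on the correct reproducing-kernel space (and correctly interfacing it with the decomposition into coordinate processes) is the technical heart of the argument.
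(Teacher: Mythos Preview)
Your plan takes a genuinely different route from the paper. The paper works entirely in the spectral domain: it computes the cross-spectral densities $\rho_{ll'}(\omega)$ of LMat explicitly (they are Mat\'ern spectra multiplied by $(a_{ll'}^2+1)^{-\nu-1/2}e^{i\omega(s_l-s_{l'})}$) and then invokes a multivariate integral test for equivalence of Gaussian measures (tailored from \cite{bachoc2022asymptotically}), namely $K\equiv\widetilde K$ if $\int_{|\omega|>\delta}|\rho_{ll'}(\omega)/\widetilde\rho_{ll'}(\omega)-1|^2\,\mathrm{d}\omega<\infty$ for every $l,l'$. Expanding the ratio and reading off its leading behavior as $|\omega|\to\infty$ yields the three constraints $\sigma^2b^{2\nu}=\widetilde\sigma^2\widetilde b^{2\nu}$, $A=\widetilde A$, $S=\widetilde S$ in one stroke, with no separate treatment of the three parameter groups. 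The converse (parameters differ $\Rightarrow$ measures inequivalent) is handled by an Ibragimov--Rozanov argument of the type you sketch at the end for $A$.

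Your marginal-reduction program is natural, but there is a real gap in the $\Leftarrow$ direction. Equivalence of each coordinate marginal---which is indeed what Zhang's theorem gives---does \emph{not} imply equivalence of the joint $L$-variate measure: the cross-covariance blocks also change when $(\sigma^2,b)$ moves along the level set $\sigma^2b^{2\nu}=\text{const}$, and your sentence ``the joint law over $\RR\times\mathscr{C}$ inherits equivalence'' is precisely the step that needs a proof. The paper's spectral integral test sidesteps this because it is a direct criterion on the full matrix-valued spectral density. For the $\Rightarrow$ direction your ideas are broadly sound (marginals pin down $\sigma^2b^{2\nu}$; cross-blocks then determine $S$ and $A$), and your closing paragraph correctly flags the delicate point that a constant rescaling of a cross-block must be shown to be non-Hilbert--Schmidt; but the ``kink location'' argument for $S$ is heuristic as written and would ultimately need the same operator-norm control. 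The spectral route packages all three identifications into a single asymptotic expansion of $\rho_{ll'}/\widetilde\rho_{ll'}$, which is why it is considerably shorter than what your program would require if carried out in full.
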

	The proof is given in Appendix \ref{apdx:iden}. Although $\sigma^2$ and $b$ in GPlag are not identifiable, the proposed interpretation of the model does not rely on these two parameters. Instead, the parameter $A$ is interpreted as a measure of dissimilarity between time series under a time lag represented by the parameter $S$.
	
	It is important to note that GPs are commonly used for prediction and regression tasks. For this purpose, even though some kernel parameters such as $\sigma^2$ and $b$ are not identifiable in these models, the prediction performance remains asymptotically optimal when the kernel parameters are mis-specified~\citep{kaufman2013role}. This is because regression is a distinct problem from parameter inference in GPs.
	
	It is also worth noting that the identifiability of kernel parameters in GPs depends on the dimension of the domain. For time series, the domain is a 1-dimensional interval in $\RR^1$. However, our theorem holds for 2-dimensional and 3-dimensional domains. In dimensions greater than $3$, it becomes a challenging problem and we treat it as a future work since our focus of this work is one-dimensional time series.
	
	\section{Simulation studies}\label{sec:sim}
	The main focus of our experiments was to illustrate that GPlag could accurately estimate both the time lag and the extent of lead-lag effects. This capability aided in the ranking or clustering of multiple time series that existed in a lead-lag relationship with a target time series. We demonstrated the theoretical properties of GPlag using synthetic data, as these properties were best illustrated under a fully controlled environment. All code and additional implementation details were available in the \Cref{apdx:simulation}.
	
	\textbf{Baseline models} 1) For estimating the time lag, we compared with Lead-Lag \citep{hoffmann2013estimation}. 2) By utilizing the GP framework, we were able to predict values for two time series at unobserved time points. Then we evaluated the prediction error against three baselines: natural cubic splines with five
	breakpoints, SGP and Magma. 3) To evaluate the effectiveness of the estimate of lead-lag effects, we employed four baseline methods: TLCC, DTW, soft-DTW, and soft-DTW divergence. TLCC and DTW could output aligned time series, allowing us to calculate correlation coefficients from the aligned data. On the other hand, soft-DTW and soft-DTW divergence measured the amount of warping needed to align two time series and provided a distance loss metric. We used both correlation coefficients and distance loss as measures of dissimilarity between the time series.
	
	\textbf{Parameter estimation and inference.}
	To validate the identifiability of our model, we simulated a pair of time series from GPlag with three kernels in \Cref{eqn:LRBF,eqn:LExp,eqn:LMat}, with 100 replicates. We set the mean to zero for both series and specified parameters as $b = 0.3,~a = 1,~s = 2,~\sigma^2=4$ for each kernel. For LMat, we set the smoothness parameter as $\nu = 3/2$. The number of measurements within each time series varied across $ \{20, 50, 100, 200\}$ and the time points were generated following $t_i = i + \epsilon$ and $t'_i = t_i + s$, where $i \in \{1,2, \dots, n\}$ and $\epsilon \sim \text{Unif}(-\frac{1}{4}, \frac{1}{4})$ represents noise.
	
	\begin{figure}[!htbp]
		\vskip -0.1in
		\begin{center}
			\centerline{\includegraphics[width=\columnwidth]{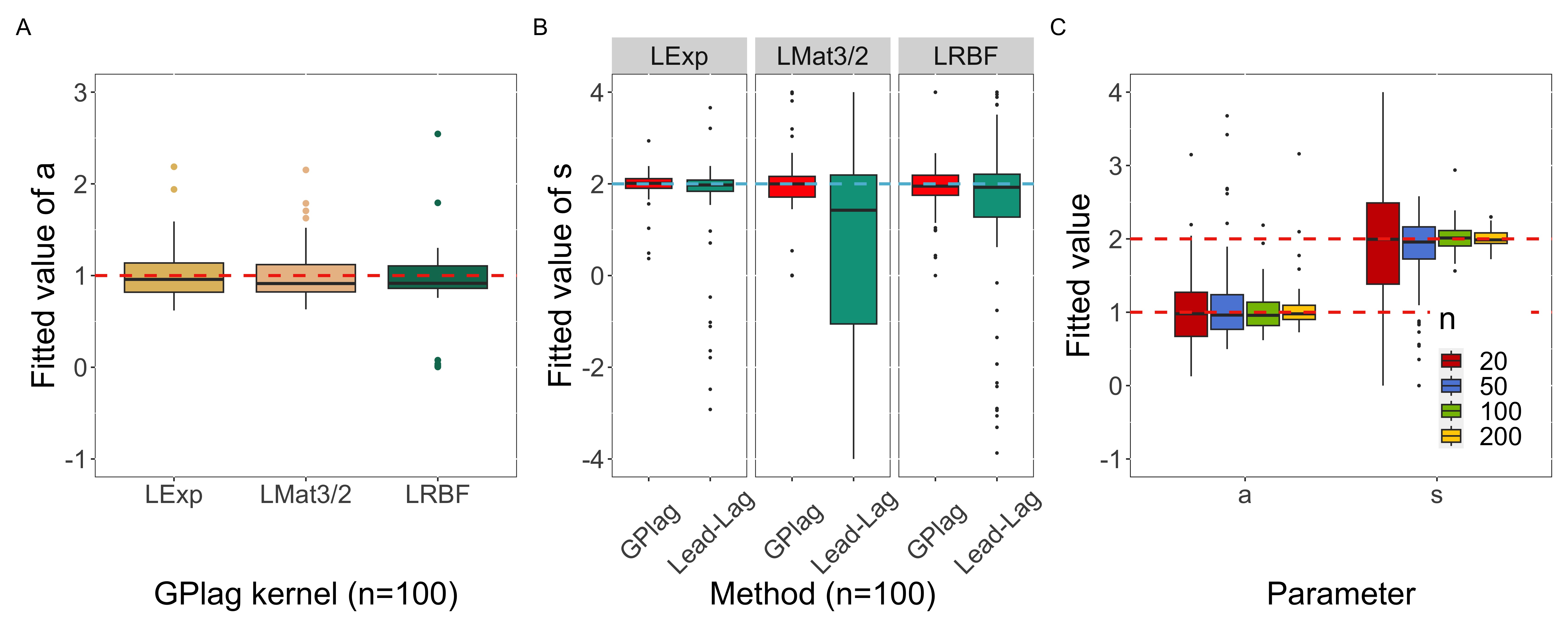}}
			\caption{Parameter estimation. (A) MLE of $a$ in GPlag with data ($n=100$) generated from three different kernels. (B) $s$ estimated by GPlag and Lead-Lag with data ($n=100$) generated from three different kernels. (C) MLEs of GPlag parameters $a$ and $s$ in the LExp kernel with different $n$. The true parameter values were represented by the dashed red horizontal lines. All boxplots were from 100 replicates.}
			\label{fig:identifibility}
		\end{center}
		\vskip -0.2in
	\end{figure}
	
	Our results showed that the MLE of $a$ and $s$ were very close to their truth among all three kernels when $n = 100$. This was consistent with the theoretical analysis in \Cref{thm:iden} (\Cref{fig:identifibility} A, B) that GPlag could accurately recover the true lead-lag effects. Compared to Lead-Lag, GPlag exhibited similar or smaller variance, specifically when the kernel was Mat\'ern with $\nu=3/2$. 
	Additionally, we performed simulations with a varying number of time points and found that the variance of the MLEs of $a$ and $s$ decreased as the sample size increased, which empirically confirmed the consistency of the MLEs (\Cref{fig:identifibility} C). \changedreviewertwo{Plots for other parameters of secondary interest, including $\sigma^2$, $b$, and $\sigma^2b^{2\nu}$, are provided in \Cref{fig:bsigma2}. We also performed simulations with different settings of $b=0.3,5,10$, and found that the consistency of the MLEs for $a$ and $s$ was maintained across these different values of $b$ (See \Cref{fig:variousb} for more details).}

	\textbf{Prediction.}\label{subsec:pred}
	We assessed the accuracy of GPlag's predictions by generating data from two different processes. In the first, we used the kernel LExp in \Cref{eqn:LExp} with parameters $b = 1,~a = 0.3,~s = 2$, and $\sigma^2 = 4$. In the second, we generated pairwise time series from linear functions with non-Gaussian noise (t-distributed): $y = 2t + 3 + 5\epsilon_1$ and $y = 2(t-20) + 3 + 5\epsilon_2$ where $t =0, 1, 2, \ldots, 100$. The noise term, $\epsilon_1, \epsilon_2$, were drawn from a t-distribution with 5 degrees of freedom. We then randomly selected $50\%$ of the data as training data for model fitting and used the remaining $50\%$ as testing data for prediction in both settings. 
	We applied the best linear unbiased prediction (BLUP, see \Cref{apdx:simulation} for more details) and used the mean squared error (MSE) on the testing set as the metric to evaluate the prediction accuracy. 
	
	Among all methods, GPlag achieved the highest log-likelihood and smallest MSE~(\Cref{fig:ranking} A-B, \Cref{fig:R2Q2}). The advantage of GPlag over SGP and splines was that it modeled two time series together which allowed for sharing of information between time series through $a$. When $a \rightarrow \infty$, it was equivalent to SGP. While a small estimated $a$ indicated that two time series were similar, and the covariance function then allowed for more information to be shared between them. Although Magma adopted a hierarchical GP to specifically model the mean function for irregular data, it did not consider time lag information across time series.

	\textbf{Clustering or ranking time series based on dissimilarity to target time series.}

	We evaluated the relationship between the magnitude of parameter $a$ and the degree of lead-lag effects (the level of synchrony with a designated target signal) to assess the effectiveness of GPlag. We simulated nine time series from the function $f_k(t) = \frac{\arctan(k(t + s))}{\arctan(k)}$ for every pair of $k$ and $s$, where $k\in \{0.01, 1, 10\}$ and $s\in \{0, 0.5, 1\}$. Among them, the time series simulated from the function of $k=0.01,~s=0$ was treated as the target time series, which was close to a linear function~(\Cref{fig:ranking} C, ts1). Each time series had $50$ times points ranging from $[-2,2]$. Here, $k$ determined the degree of distortion, where a smaller $k$ (e.g. 0.001) corresponded to a curve close to the identity transformation, and a larger $k$ represented the cases with more upward/downward distortion. By comparing different columns in \Cref{fig:ranking} C, we could see that the degree of dissimilarity between the target time series and other time series was controlled by $k$, whereas $s$ controlled the time lag along the $t$-axis.
	
	\begin{figure}[htb]
		\vskip -0.1in
		\begin{center}
			\centerline{\includegraphics[width=5.5in]{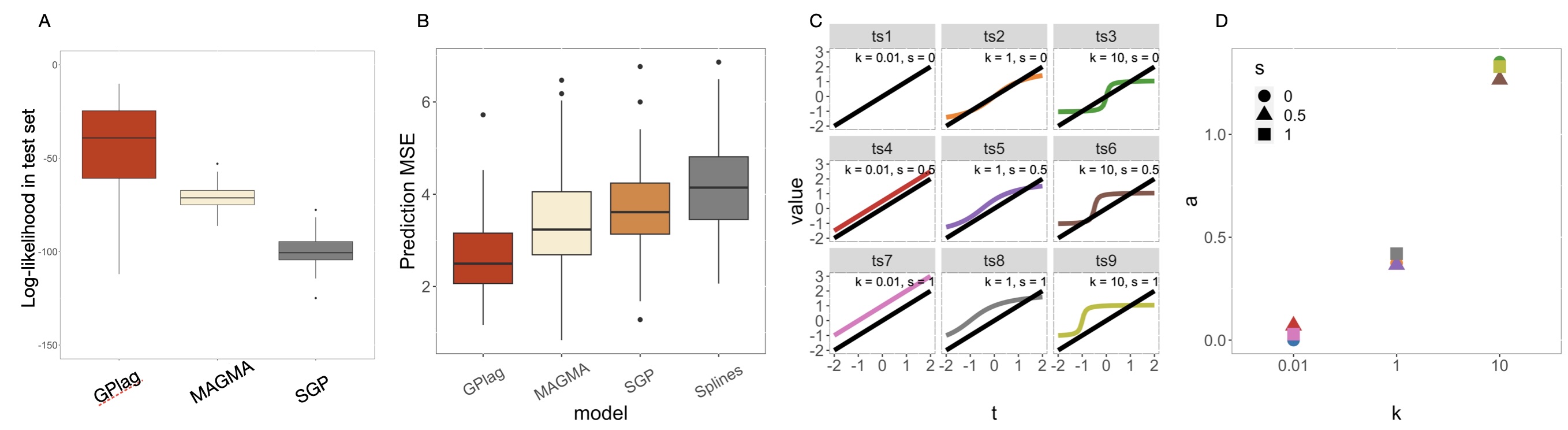}}
			\caption{(A-B) Evaluation of Prediction Performance for Time Series Derived from LExp Kernel, as Assessed by Log-Likelihood (A) and Mean Squared Error (MSE) (B). (C) Line plots of 9 features time series vs. target time series with various orders of distortion and time lag. The black line represents the target time series. (D) Estimation of $a$ in 9 pairwise comparisons. }
			\label{fig:ranking}
		\end{center}
		\vskip -0.2in
	\end{figure}

	Nine pairs of time series were fitted by GPlag with the LExp kernel. We observed that three pairs of time series in each column with the same $k$ were grouped together (\Cref{fig:ranking} D), and $a$ increased as $k$ increased.
	This implied that GPlag could accurately rank the degree of lead-lag effects among multiple pairs of time series based on the parameter $a$, through precise estimation of the time lag parameter $s$. A larger estimate of $a$ indicated a stronger dissimilarity between pairs of time series. To further quantify the clustering performance of lead-lag effects among the nine pairs of time series, we used the Adjusted Rand Index (ARI,~\cite{hubert1985comparing}) and Normalized Mutual Information (NMI,~\cite{strehl2002normalized}) as evaluation metrics. To define a cluster from the pairwise dissimilarity parameter (correlation coefficients for TLCC and distance loss for DTWs), we used K-means~\citep{6773024} with 3 clusters. As the focus of this work was on determining  pairwise dissimilarity rather than the clustering algorithm, we opted for a simple clustering method. GPlag achieved the highest ARI and NMI in clustering the lead-lag effects among 9 pairwise time series (\Cref{tab:tab1}, column 2-3). \changedreviewertwo{We also tested GPlag using the LRBF and LMat kernels, and found that ARI and NMI scores remained at 1, consistent with the LExp results. The inferred $a$ values are shown in \Cref{fig:cluster_rbf_mat}, further indicating that the results are not sensitive to the choice of kernel.}

	\begin{table}[h]
		\centering
		\caption{Evaluation of GPlag vs TLCC, DTW, soft-DTW, soft-DTW divergence on various datasets}
		\label{tab:tab2}
		\small
			\begin{tabular}{|c|cc|cc|}
				\hline
				& \multicolumn{2}{c|}{Synthetic data clustering} & \multicolumn{2}{c|}{Dynamic Chromatin Interactions}  \\ 
				\hline
				Model & \multicolumn{1}{c|}{ARI}   & NMI         & \multicolumn{1}{c|}{\begin{tabular}[c]{@{}c@{}}p-value: \\ association test\end{tabular}} & \begin{tabular}[c]{@{}c@{}}p-value: \\ enrichment test\end{tabular} \\ 
				\hline
				TLCC  & \multicolumn{1}{c|}{0.35} & 0.58        & \multicolumn{1}{c|}{0.051}  & 0.13  \\ 
				\hline
				DTW      & \multicolumn{1}{c|}{0.4}   & 0.55  & \multicolumn{1}{c|}{0.292}  & 1   \\ 
				\hline
				soft-DTW & \multicolumn{1}{c|}{0.27}  & 0.49        & \multicolumn{1}{c|}{0.957}  & 0.57   \\ 
				\hline
				soft-DTW divergence & \multicolumn{1}{c|}{0.07}  & 0.39        & \multicolumn{1}{c|}{0.907}  & 0.90   \\ 
				\hline
				GPlag & \multicolumn{1}{c|}{\textbf{1.00}}  & \textbf{1.00}  & \multicolumn{1}{c|}{\textbf{0.019}} & \textbf{0.05}  \\ 
				\hline
			\end{tabular}%
		\\
		{\small \emph{Note:} Since the above algorithms are deterministic, the standard deviation is not reported.}
	\end{table}

	
	\textbf{Extension to three time series.}
	To demonstrate the applicability of GPlag to more than two time series, we conducted simulation studies using three time series generated from the LExp kernel. In this scenario, all constraints were linear, allowing us to utilize optimizers that support linear constraints. Similar to the pairwise simulation setting, we set the mean to zero for all series, and the time points were generated according to the equations $t_i = i + \epsilon$ and $t'_i = t_i + s_i$, where $i \in \{1,2, \dots, 50\}$ and $\epsilon \sim \text{Unif}(-\frac{1}{4}, \frac{1}{4})$ represented noise. The specified parameter values were $b = 0.3, a_{12} = a_{13} = a_{23} = 1, s_2 = 2, s_3 = 4, \sigma^2 = 4$. The maximum likelihood estimate (MLE) of the parameters closely approximated the true values (represented by the red dashed line), providing further validation of the theoretical properties. Additionally, the proposed method achieved the lowest mean squared error (MSE) in this particular setup (see \Cref{fig:multitimeseries}).
	
	\begin{figure}[h!]
		\centering
		\includegraphics[width=0.5\textwidth]{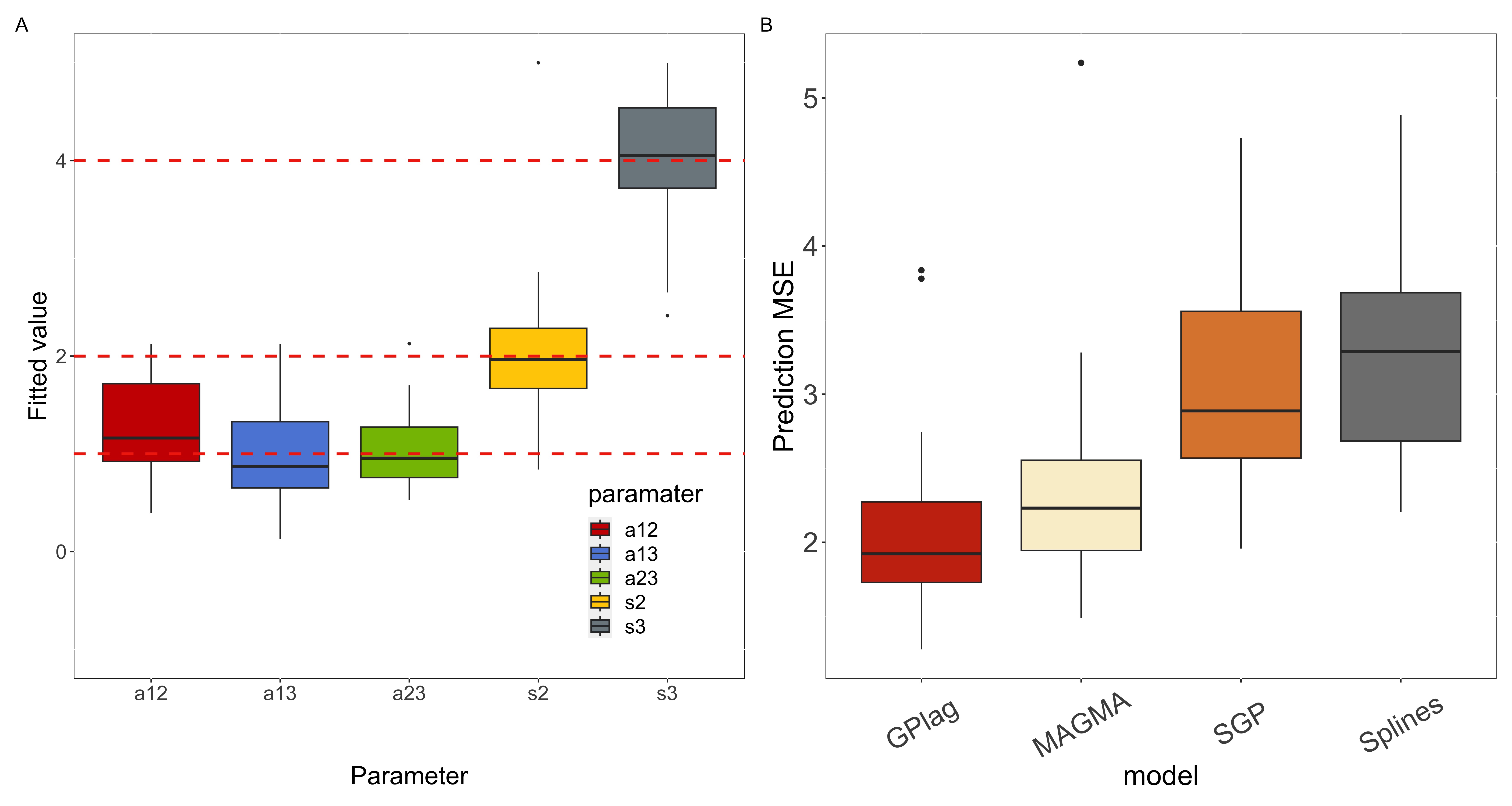}
		\caption{Fitting performance on three time series generated from LExp with b = 0.3, $a_{12}=a_{13}=a_{23}=1$, $s_2=2, s_3=4$. (A) MLE of $a_{12},a_{13},a_{23},s_1,s_2$. The true parameter values
			are represented by the dashed horizontal lines. (B) Prediction MSE on the held-out dataset, from 100 replicates.}
		\label{fig:multitimeseries}
	\end{figure}

	\section{\changedAE{Genomics application}: Dynamic chromatin interactions}\label{sec:appl}
	In this section, we applied GPlag to a time-resolved multi-omics dataset involving enhancers and genes in human cells activated by interferon gamma from a resting state \citep{reed2022temporal}. Originally, the dataset comprised over 20,000 pairs. Each pair consisted of one time series for each enhancer and likewise one time series for each gene, at seven irregularly-spaced time points (0, 0.5, 1, 1.5, 2, 4, and 6 hours after a perturbation). 
	The time series captured gene expression measurements for genes and histone tail acetylation measurements (H3K27ac), a measure of activity for the enhancers, both approximately variance stabilized.
	It is widely believed that the alterations in acetylation occur prior to the changes in expression of the regulated genes \citep{reed2022temporal}.
	\changedAE{We focused only on pairs that became functional after stimulation, by applying a filter on the total range of stabilized values across the timescale, yielding 4,776 dynamic pairs of enhancer-gene time series.
		Subsequently, the objective of our analysis, after estimating the model parameters of GPlag for these 4,776 pairs, was to identify the subset of all possible enhancer-gene pairs that represent \textit{functional} enhancer-gene pairs, i.e. the enhancer responds to cell activation and regulates the expression of the gene, after some time lag. Many types of time profiles can be captured by such a time series experiment, e.g. increasing across the range, decreasing, or increasing then decreasing, etc. Particular functional enhancer-gene pairs will have subtle differences in the dynamics of the coordinated change in activity and expression, compared to other functional pairs, which we hope to capture with our GPlag model.}
	
	
	
	
	To assess the performance of GPlag in ranking enhancer-gene connections, we split the pairs into two groups according to the parameter $a$. The top 20\% pairs were designated as the candidate enhancer-gene group, with the remaining pairs forming the negative group. To compare against baselines, we split all pairs into two groups using their respective metric of similarity and a 20\% cutoff. Subsequently, we assessed the performance from three perspectives. 
	
	Firstly, we compared the Hi-C (High-throughput Chromosome Conformation Capture, \cite{lieberman2009comprehensive}) counts between the candidate enhancer-gene group and the negative group. Since Hi-C counts measured the frequency of physical interactions between enhancers and genes in three-dimensional chromatin structure, we expected the candidate enhancer-gene group to have higher Hi-C counts, 
	while the Hi-C information was not used in our ranking of pairs.
	Our results showed candidate enhancer-gene pairs exhibited significantly higher Hi-C counts compared to the other pairs~(\Cref{tab:tab2}, column 4-5), with GPlag producing the smallest $p$-value (Wilcoxon rank-sum test, $p = 0.019$). 
	
	Secondly, we analyzed the enrichment of \textit{looped} enhancer-gene pairs between the two groups. A \textit{looped} enhancer-gene pair is one where the enhancer element is physically brought into close proximity with the regulated gene, enabling the enhancer's activity to influence the expression of the associated gene. Looping in this dataset was determined using Hi-C counts, but the determination of loops uses spatial pattern detection and statistical models to determine if the signal rises to the level of a ``loop''.
	We observed that candidate enhancer-gene pairs identified by GPlag had a significantly higher likelihood of being ``looped" enhancer-gene pairs compared to the matched pairs from the other group, selected based on genomic distance \citep{davis2022matchranges} ($\chi^2$ test, $p = 0.05$). However, no significant difference in enrichment was found between the two groups of pairs identified using baseline methods.
	
	The lack of statistical significance observed in DTW-based methods may be due to the limited information available for alignment because of the small number of observations. Additionally, with irregularly spaced time points, TLCC might introduce bias when estimating time lag and coefficients. Given the constraint of TLCC, which can only generate time lag estimates from the observed time points, its performance was compromised when applied to this dataset with irregular time gaps (\Cref{fig:realdata}). However, GPlag, by estimating a continuous time lag, was able to identify lags that restored strong similarity between the pairwise time series. 

	Thirdly, to evaluate if our candidate enhancer-promoter pairs were detected as having a functional relationship in an orthogonal experimental data type, we downloaded the genomic locations of response expression quantitative trait loci (response eQTL) found in human macrophages exposed to IFN$_{\gamma}$ \citep{alasoo2018shared} (that is, using one of the two stimuli used in the \citet{reed2022temporal} experiment, and in the same cell type). Because the number of eQTLs was small compared to the number of enhancers, we used a relatively stringent threshold to label a putative functional enhancer-promoter pair: 
	we split the candidate enhancer-promoter pairs into groups according to posterior mean correlation ($> 0.98$). There were 74 H3K27ac peaks overlapping with macrophage response eQTL, among which 12 pairs were in our defined candidate enhancer-promoter pair sets. Therefore, there was a significant enrichment of our candidate enhancer-promoter pairs to eQTLs ($\chi^2$ test $p = 0.18 \times 10^{-3}$) futher suggesting the relevance of the enhancer-promoter pairs highly ranked by GPlag.

	\begin{figure}[!h]
		\centering
		
			\centering
			\includegraphics[width=\textwidth]{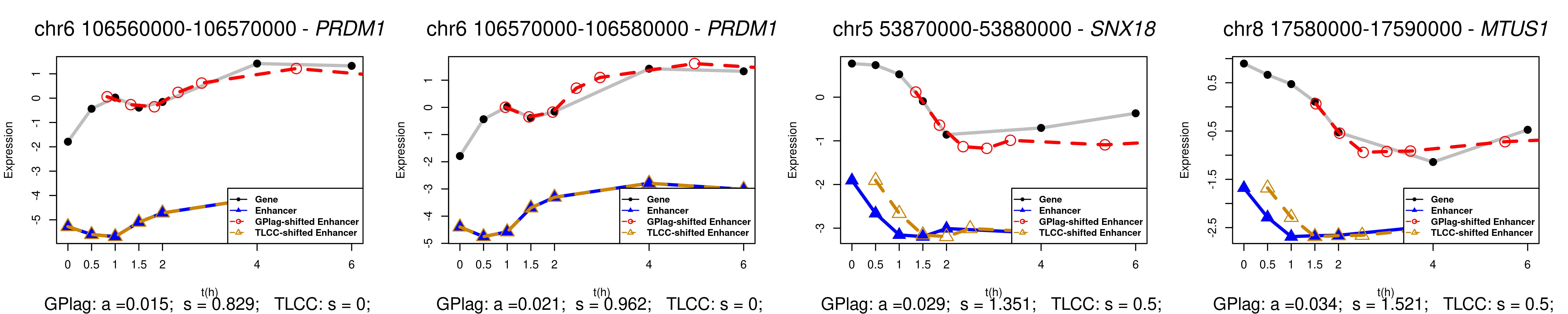}
			\label{fig:plot1}
		
		
		\caption{Visualization of GPlag modelling output on selected pairs. The title for each subfigure represents the enhancer's genomic location  and the gene name. 
			The solid circle line represents gene expression; 
			the solid triangle line represents enhancer activity; 
			the dashed circle line represents the enhancer shifted by GPlag; and 
			the dashed triangle line represents the enhancers shifted by TLCC. }
	\vspace{-13pt}
	\label{fig:realdata}
\end{figure}

\section{Discussion}\label{sec:disc}
\vskip -5pt
GPlag is a flexible and interpretable model specifically designed for irregular time series with limited time points that exhibit lead-lag effects, a scenario often encountered in biology, which inspired its development. 
These irregularities can be caused by missing values or study design restrictions. 
We believe that our method's adaptability renders it valuable for time series applications in various areas such as longitudinal gene expression, epigenetics, proteomics, cytometry, and microbiome studies.
It can be applied to a variety of problems including but not limited to ranking or clustering time series based on the degree of lead-lag effects, and interpolation or forecasting. 
The core feature of our methodology is the transform of the multi-output GP into a single-output GP, which facilitates the identification of parameters. This approach is grounded in theoretical support, affirming its validity as a covariance kernel for GP. This transform leverages the shared information among time series and enables simultaneous modeling of the time lag parameter $s$ and dissimilarity parameters $a$, thereby achieving desirable interpretability and expanding the application of the model. Additionally, our model can be extended to accommodate multiple time series, beyond just pairwise time series, enhancing its applicability and versatility.

The work here suggests some exciting future directions. First, when the sample size $n$ is small, the choice of the kernel is crucial in GP models. The extension to a non-stationary kernel allows for more flexibility when modelling non-stationary time series with more heterogeneity. The two main challenges are to find a valid non-stationary kernel with interpretable parameters and perform efficient and effective statistical inference. On the other hand, for time series data with large $n$, i.e., dense time points, the vanilla GP is computationally expensive with $\mathrm{O}(n^3)$ complexity~\citep{rasmussen2004gaussian}. In this situation, we can apply scalable GP methods such as nearest neighbor Gaussian processes (NNGP,~\citet{datta2016hierarchical}) or variational inference GP~\citep{tran2015variational} to reduce the complexity to $\mathrm{O}(n\log n)$ for large datasets. \changedAE{In addition, incorporating additional parameters, such as varying the variance $\sigma^2$ and lengthscale $b$ across different time series, could potentially enhance model performance.} \changedreviewerone{Finally, the question of whether the parameters in Theorem 3 can be consistently estimated remains open. Even for simpler kernels such as the Mat\'ern, consistent estimators for identifiable  parameters (e.g., $\sigma^2 b^{2\nu}$ and $\nu$) were only recently developed \citep{loh2021fixed,loh2023estimating}. For more complex kernels, like the ones proposed in this paper, constructing consistent estimators is an open and challenging problem.
}

\bibliographystyle{chicago}
\bibliography{biomsample.bib}

\appendix

\renewcommand\thefigure{S\arabic{figure}}
\renewcommand{\theHfigure}{Supplementary.\thefigure}
\renewcommand\theequation{S\arabic{equation}}

\setcounter{figure}{0}
\setcounter{equation}{0}

\section{Additional covariance functions for GPlag}\label{apdx:kernels}
All equations below can serve as covariance functions for GPlag, where $\sigma^2$ measures the spatial variance, $A$ measures the group dissimilarity, $S$ measures the time lag, $b>0$ measures the range/decay, $c>0$ measures the separability, and $\nu>0$ measures the smoothness of sample path. $K_\nu$ is the modified Bessel function of the second kind with degree $\nu$. 
\begin{align}
	K((t,l),(t',l'))&=
	\frac{2\sigma^2c^{1/2}\left\{\frac{b}{2}\left(\frac{a_{ll'}^2+1}{a_{ll'}^2+c}\right)^{1/2}|t-t'+s_l-s_{l'}|\right\}^\nu }{(a_{ll'}^2+1)^\nu(a_{ll'}^2+c)^{1/2}\Gamma(\nu)}K_\nu\left(b\left(\frac{a_{ll'}^2+1}{a_{ll'}^2+c}\right)^{1/2}|t-t'+s_l-s_{l'}|\right),\label{eqn:matern}\\
	K((t,l),(t',l'))&=
	\frac{\sigma^2c^{1/2}}{(a_{ll'}^2+1)^{1/2}(a_{ll'}^2+c)^{1/2}}\exp\left\{-b\left(\frac{a_{ll'}^2+1}{a_{ll'}^2+c}\right)^{1/2}|t-t'+s_l-s_{l'}|\right\},\label{eqn:matern1/2}\\
	K((t,l),(t',l'))&=\frac{\sigma^2}{(a_{ll'}^2+1)^{1/2}}\exp\left\{-\frac{b|t-t'+s_l-s_{l'}|}{(a_{ll'}^2+1)^{1/2}}\right\},\label{eqn:laplace}\\
	K((t,l),(t',l'))&=\frac{\sigma^2(a_{ll'}^2+1)^{1/2}}{(a_{ll'}^2+1)+b(t-t'+s_l-s_{l'})^2},\label{eqn:rational_quad}\\
	K((t,l),(t',l'))&=\sigma^2\exp\left\{-a_{ll'}^2-b(t-t'+s_l-s_{l'})^2-a_{ll'}^2(t-t'+s_l-s_{l'})^2\right\},\label{eqn:complex_exponential}
\end{align}

As an illustration, \Cref{eqn:matern} is the product of a kernel in Gneiting family with $\phi(t)=\frac{(\tilde{b}t^{1/2})^\nu}{2^{\nu-1}\Gamma(\nu)}K_\nu(\tilde{b}t^{1/2})$, $\psi(t)=(a^2t+c)/c(a^2t+1)$ in \Cref{cly:kernel_family} with $\tilde{b}=b/\sqrt{c}$, and a simple kernel $\frac{1}{(a_{ll'}^2+1)^{\nu+1/2}}$; \Cref{eqn:matern1/2} is a special case of \Cref{eqn:matern} when $\nu=1/2$. Moreover, if we set $c=1$, it becomes LExp; \Cref{eqn:laplace} can be derived by setting $\phi(t)=e^{-bt^{1/2}}$, $\psi(t)=a^2t+1$. \Cref{eqn:rational_quad} can be derived by setting $\phi(t)=(1+bt)^{-1}$, $\psi = a^2t+1$. \Cref{eqn:complex_exponential} is not in Gneiting family, but from \cite{cressie1999classes}, who proposed another spectral density based construction for nonseparable spatiotemporal kernels. 

In particular, when there are only two time series, the above kernels can be simplified by replacing $a_{ll'}$ by $\mathrm{\bf 1}_{\{l\neq l'\}}a^2$ and reparameterizing $s=s_2$.

\subsection{Proof of Theorem \texorpdfstring{\ref{thm:kernels} and \ref{thm:multiGPlag}}{1 and 2}}\label{apdx:mtsgp}

We prove the following lemma, which implies both \Cref{thm:kernels} and \Cref{thm:multiGPlag}. 
\begin{lemma}\label{thm:kernels_general}
	Let $K'$ be a semi-stationary kernel on $\RR\times \mathscr{C}$, where $\mathscr{C}=\{1,\cdots,L\}$, then the induced kernel defined as
	\begin{equation}\label{eqn:kernels_general}
		K((t,l),(t',l'))\coloneqq K'(t-t'+s_{l}-s_{l'},l,l')
	\end{equation}
	is positive definite, and thus serves as a covariance kernel for a GP, where $s_1$=0, $s_l$ measures the time lag with respect to the first time series.
\end{lemma}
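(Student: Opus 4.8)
The plan is to show that the induced kernel $K$ on $\RR \times \mathscr{C}$ is positive definite by exhibiting it as the pullback of the semi-stationary kernel $K'$ under a suitable map of the index set, and then invoking the elementary fact that positive definiteness is preserved under such pullbacks. Concretely, define $\Phi : \RR \times \mathscr{C} \to \RR \times \mathscr{C}$ by $\Phi(t,l) = (t + s_l, l)$, where we set $s_1 = 0$ and each $s_l \in \RR$. This is a bijection on each ``fiber'' $\RR \times \{l\}$, simply a translation of the time axis depending on the series label. The key observation is that
\begin{equation*}
K\bigl((t,l),(t',l')\bigr) = K'\bigl((t+s_l) - (t'+s_{l'}),\, l,\, l'\bigr) = K'\bigl(\Phi(t,l),\, \Phi(t',l')\bigr),
\end{equation*}
where in the last expression we use that $K'$, being semi-stationary, is identified with a function on $\RR \times \mathscr{C} \times \mathscr{C}$ via $K'(u,l,l') = K'((u,l),(0,l'))$, and semi-stationarity gives $K'((a,l),(c,l')) = K'((a-c,l),(0,l'))$ for all $a,c$. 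Thus $K$ is literally $K'$ composed with the map $\Phi$ on both arguments.

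First I would make the semi-stationarity bookkeeping precise: recall from the setup that $K'$ semi-stationary means $K'((t,l),(t',l')) = K'((t+h,l),(t'+h,l'))$ for all $h$, and hence $K'((t,l),(t',l'))$ depends only on $t - t'$ (and on $l, l'$), which is exactly what licenses writing it as $K'(t-t', l, l')$. Then for any finite set of points $(t_1, l_1), \dots, (t_n, l_n) \in \RR \times \mathscr{C}$ and any coefficients $c_1, \dots, c_n \in \RR$, set $(\tau_i, l_i) \coloneqq \Phi(t_i, l_i) = (t_i + s_{l_i}, l_i)$. Then
\begin{equation*}
\sum_{i,j} c_i c_j\, K\bigl((t_i,l_i),(t_j,l_j)\bigr) = \sum_{i,j} c_i c_j\, K'\bigl((\tau_i, l_i),(\tau_j, l_j)\bigr) \geq 0,
\end{equation*}
the inequality being the positive definiteness of $K'$ applied to the points $(\tau_i, l_i)$. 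This establishes that $K$ is positive definite, hence a valid covariance kernel, and by Kolmogorov's extension theorem there is a Gaussian process with covariance $K$. Finally I would note the interpretation: when $l = l'$ the shift $s_l - s_{l'}$ vanishes so $K$ restricted to each series is unchanged, and when $l \neq l'$ the cross-covariance is that of $K'$ evaluated at a time separation offset by $s_l - s_{l'}$, which is precisely the ``time lag between series $l$ and series $l'$ relative to the baseline series $1$.''

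There is essentially no hard step here — the argument is a one-line change of variables — so the only thing requiring care is the \emph{notational} identification of the three-argument object $K'(u,l,l')$ with the original two-point kernel and verifying that semi-stationarity is used correctly (in particular that $(t+s_l) - (t'+s_{l'})$ is indeed the relevant increment after applying $\Phi$, so that no spurious $h$-dependence sneaks in). To finish, I would then remark that \Cref{thm:kernels} is the special case $L = 2$ with $s_1 = 0$ and $s_2 = s$, so that $s_l - s_{l'} = \pm s = -\mathrm{\bf 1}_{\{l \neq l'\}} s$ up to the sign convention absorbed into $K'$ (which is symmetric in its treatment of the lag via $K'(u,l,l') = K'(-u,l',l)$), and \Cref{thm:multiGPlag} follows by specializing $K'$ to the semi-stationary RBF, exponential, and Mat\'ern kernels of the Gneiting family from \Cref{cly:kernel_family}, whose positive definiteness is already established there; the distance-metric constraints on $A$ enter only through that corollary, not through the present lemma.
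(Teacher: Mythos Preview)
Your proposal is correct and takes essentially the same approach as the paper: both arguments shift each observation $(t_i,l_i)$ to $(\tau_i,l_i)=(t_i+s_{l_i},l_i)$ and then invoke the positive definiteness of $K'$ on the shifted points, the only difference being that you frame this as a pullback under the map $\Phi$ while the paper speaks of a ``fake'' set of observations. Your derivation of \Cref{thm:kernels} and \Cref{thm:multiGPlag} as specializations also matches the paper's.
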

\begin{proof}
	It suffices to show that given any set of observations $\{(t_i,c_i)\}_{i=1}^n$, the covariance matrix ${\Sigma}_{ij}={K}((t_i,c_i),(t_j,c_j))$ is positive definite. We start with a ``fake" set of observations, denoted by $\{(\widetilde{t}_i,c_i)\}_{i=1}^n$ induced by $\{(t_i,c_i)\}_{i=1}^n$ defined as $\widetilde{t}_{i}\coloneqq t_i+s_{c_i}.$ By the definition of ${K}$, we observe that
	\begin{align*}
		{\Sigma}_{ij}&={K}((t_i,c_i),(t_j,c_j))= K'(t_i-t_j+s_{c_i}-s_{c_j},c_i,c_j) = K'(\widetilde{t}_i-\widetilde{t_j},c_i,c_j)\eqqcolon \Sigma'_{ij},
	\end{align*}
	where $\Sigma'$ is the covariance matrix derived from $K'$, a positive definite kernel, evaluated on the ``fake" set of observations, hence is positive definite. As a result, $K$ is positive definite for any time lags $\{s_i\}_{l=1}^L$. 
\end{proof}
\Cref{thm:kernels} can be proved by setting $L=2$ and $s=s_2$.

\Cref{thm:multiGPlag} can be proved by setting $K'$ as
\begin{align*}
	K(t,l,l')&=\frac{\sigma^2}{
		(a_{ll'}^2+1)^{1/2}}e^{-\frac{bt^2}{a_{ll'}^2+1}},\\
	K(t,l,l')&=\frac{\sigma^2}{
		(a_{ll'}^2+1)}e^{-b\left|t\right|},\\
	K(t,l,l')&=\frac{\sigma^22\left\{\frac{b}{2}|t|\right\}^\nu }{(a_{ll'}^2+1)^{\nu+1/2}\Gamma(\nu)}K_\nu\left(b|t|\right).
\end{align*}

\subsection{Proof of \texorpdfstring{\Cref{thm:iden}}{3}}\label{apdx:iden}
A powerful tool to study equivalence of Gaussian processes is the spectral density:

\begin{definition}\label{def:spec}
	The spectral density $f$ associated with stationary Gaussian process $K$ in domain $\RR$ is defined as
	$$f(\omega) \coloneqq \frac{1}{\sqrt{2\pi}}\int_\RR e^{-i\omega t}K(t)\mathrm{d}t.$$
\end{definition}
\begin{remark}
	There are multiple commonly used definitions for spectral density. For example, $e^{-i\omega t}$ is sometimes replaced by $e^{-2\pi i \omega t}$ and/or without $\frac{1}{\sqrt{2\pi}}$. Throughout this article, we stick to the above definition and omit some absolute multiplicative constants for simplicity.
\end{remark}
The following lemma provides the spectral densities of the kernels introduced in Section \ref{sec:model}.
\begin{lemma}\label{lem:spec}
	The spectral densities of LRBF, \changedreviewerone{LExp and LMat} and  are given by
	\begin{align*}
		\rho_{ll'}(\omega) &= \frac{\sigma^2}{\sqrt{b}}e^{ i\omega(s_l-s_{l'})}\exp\left\{-\frac{(a_{ll'}^2+1)\omega^2}{b}\right\},\\
		\rho_{ll'}(\omega)&=\frac{\sigma^2b}{(a_{ll'}^2+1)(b^2+\omega^2)}e^{  i \omega(s_l-s_{l'})},\\
		\rho_{ll'}(\omega)&=\frac{\sigma^2b^{2\nu}}{(a_{ll'}^2+1)^{\nu+1/2}(b^2+\omega^2)^{\nu+1/2}}e^{  i \omega(s_l-s_{l'})}.
	\end{align*}
\end{lemma}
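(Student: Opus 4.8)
The plan is to compute the spectral densities directly from the definition in \Cref{def:spec}, using the fact that each kernel in question factors as a product of a purely temporal stationary kernel (in the variable $t-t'+s_l-s_{l'}$) and a constant depending only on $a_{ll'}$. Writing $\tau = t-t'$ and $\delta = s_l - s_{l'}$, each kernel has the form $c(a_{ll'})\, g_\theta(\tau+\delta)$ for a one-dimensional stationary kernel $g_\theta$, so by the shift property of the Fourier transform, $\rho_{ll'}(\omega) = c(a_{ll'})\, e^{i\omega\delta}\, \widehat{g_\theta}(\omega)$, where $\widehat{g_\theta}$ is the (suitably normalized) Fourier transform of $g_\theta$. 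Thus the whole computation reduces to three classical one-dimensional Fourier transform identities.

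\textbf{Step 1 (LRBF).} Here $g_\theta(\tau) = \exp\{-\beta \tau^2\}$ with $\beta = b/(a_{ll'}^2+1)$, and the constant is $c = \sigma^2/(a_{ll'}^2+1)^{1/2}$. I would invoke the standard Gaussian Fourier transform, $\frac{1}{\sqrt{2\pi}}\int_\RR e^{-i\omega\tau} e^{-\beta\tau^2}\,\mathrm{d}\tau = \frac{1}{\sqrt{2\beta}} e^{-\omega^2/(4\beta)}$ (up to the multiplicative constants the remark allows us to drop). Substituting $\beta$ and multiplying by $c\, e^{i\omega\delta}$ gives $\rho_{ll'}(\omega) \propto \frac{\sigma^2}{\sqrt{b}}\, e^{i\omega\delta}\exp\{-(a_{ll'}^2+1)\omega^2/b\}$ after absorbing the factor of $4$ (and any other absolute constant) into the normalization, matching the claimed formula.

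\textbf{Step 2 (LExp).} Here $g_\theta(\tau) = e^{-\beta|\tau|}$ with $\beta = b$ (for the LExp kernel the $a_{ll'}$-dependence sits entirely in the prefactor $\sigma^2/(a_{ll'}^2+1)$, not inside the exponential; I should double-check this against \Cref{eqn:LExp}/\Cref{thm:multiGPlag} — indeed the exponent there is $-b|t-t'+s_l-s_{l'}|$ with no $a$). The classical Laplace-kernel transform is $\frac{1}{\sqrt{2\pi}}\int_\RR e^{-i\omega\tau} e^{-b|\tau|}\,\mathrm{d}\tau \propto \frac{b}{b^2+\omega^2}$, which, times $\frac{\sigma^2}{a_{ll'}^2+1} e^{i\omega\delta}$, gives the stated $\rho_{ll'}$.

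\textbf{Step 3 (LMat).} Here $g_\theta$ is the Mat\'ern correlation $\propto (b|\tau|)^\nu K_\nu(b|\tau|)$, whose one-dimensional spectral density is the classical identity $\propto b^{2\nu}/(b^2+\omega^2)^{\nu+1/2}$ (see, e.g., \citealt{stein1999interpolation}). Combining with the prefactor $\sigma^2/(a_{ll'}^2+1)^{\nu+1/2}$ and the shift factor $e^{i\omega\delta}$ yields the third formula; setting $\nu=1/2$ recovers the LExp case as a consistency check. The only genuine subtlety — the place where care is needed rather than just quoting a table — is verifying that the $a_{ll'}$-scaling inside versus outside the exponential is handled consistently for each kernel (for LRBF the factor $a_{ll'}^2+1$ rescales $\tau$ in the exponent and hence appears as $\sqrt{a_{ll'}^2+1}$ in the amplitude after the transform, combining with the original $(a_{ll'}^2+1)^{-1/2}$ prefactor to leave the clean $(a_{ll'}^2+1)\omega^2/b$ in the exponent and no surviving amplitude factor; for LExp and LMat the exponent is $a$-free so the prefactor passes straight through). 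Once this bookkeeping is done, everything else is a direct appeal to standard Fourier transforms, so I expect no real obstacle beyond keeping the constants and the $a_{ll'}$-placement straight.
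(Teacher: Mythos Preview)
Your proposal is correct and follows essentially the same approach as the paper: both reduce the computation to the standard one-dimensional spectral densities of the RBF, exponential, and Mat\'ern kernels (citing \citealt{stein1999interpolation}), then apply the time-shift property of the Fourier transform to handle the $s_l-s_{l'}$ term and pull the $a_{ll'}$-dependent prefactor through. The paper phrases the LRBF step as ``replace $b$ by $b/(a_{ll'}^2+1)$ in the diagonal spectral density,'' which is exactly your substitution $\beta=b/(a_{ll'}^2+1)$, and your careful bookkeeping about where the $a_{ll'}$-factor lives for each kernel is, if anything, more explicit than the paper's own treatment.
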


\begin{proof}
	First we calculate the spectral density of LRBF. When $l=l'$, the kernel becomes standard RBF kernel, so the spectral density is given by \citep{stein1999interpolation}:
	\begin{align*}
		\rho_{ll}(\omega) &= \frac{\sigma^2}{\sqrt{b}}e^{-\omega^2/b}.
	\end{align*}
	When $l\neq l'$, we replace $b$ in $\rho_{ll}$ by $\frac{b}{a_{ll'}^2+1}$ and due to  the time shifting property of Fourier transform, we have
	$$\rho_{ll'}(\omega) = \frac{\sigma^2}{\sqrt{b}}e^{ i\omega(s_l-s_{l'})}\exp\left\{-\frac{(a_{ll'}^2+1)\omega^2}{b}\right\},$$.
	
	Then we move to LMat. Similarly, first let $l=l'$, the kernel comes standard Mat\'ern kernel and the spectral density is \citep{stein1999interpolation}
	\begin{align*}
		\rho_{ll}(\omega) &=  \frac{\sigma^2b^{2\nu}}{(b^2+\omega^2)^{\nu+1/2}}.
	\end{align*}
	For arbitrary $l$ and $l'$, by the same trick, we have
	\begin{align*}
		\rho_{ll'}(\omega) 
	& = \frac{\sigma^2b^{2\nu}}{(a_{ll'}^2+1)(b^2+\omega^2)^{\nu+1/2}}e^{  i\omega(s_l-s_{l'})}.
\end{align*}
The spectral density of LExp is obtained by setting $\nu=\frac{1}{2}$ in the above equation.
\end{proof}

Then we introduce a test, known as the integral test, to determine whether two GPs are equivalent based on their spectral densities. We start with classical GPs on $\RR^p$.

\begin{lemma}[Integral test for univariate GP\citep{stein1999interpolation}]\label{lem:int_test_1}
Let $\rho_1$ and $\rho_2$ be spectral densities of two GPs $K_1$ and $K_2$ on domain $\RR^p$. Then $K_1\equiv K_2$ if
\begin{enumerate}
	\item There exists $\alpha>0$ such that $\rho_1(\omega)\|\omega\|^\alpha$ is bounded away from $0$ and $\infty$ as $\|\omega\|\to\infty$. 
	\item There exists $\delta>0$ such that $\int_{\|\omega\|>\delta}\left(\frac{\rho_2(\omega)}{\rho_1(\omega)}-1\right)^2\mathrm{d}\omega<\infty$.
\end{enumerate}
\end{lemma}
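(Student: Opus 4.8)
The plan is to derive equivalence from the Feldman--Hájek dichotomy for zero-mean Gaussian measures. That dichotomy states that two such measures, with covariance operators $C_1$ and $C_2$ on the relevant Hilbert space, are either mutually singular or equivalent, and that they are equivalent precisely when (a) they share the same Cameron--Martin space, i.e. $\mathrm{Range}(C_1^{1/2})=\mathrm{Range}(C_2^{1/2})$, and (b) the operator $T\coloneqq C_1^{-1/2}C_2 C_1^{-1/2}-I$ is Hilbert--Schmidt. The entire argument then reduces to translating the two hypotheses on $\rho_1,\rho_2$ into conditions (a) and (b).

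First I would set up the spectral (Fourier) representation that diagonalizes stationary covariance operators. For a stationary field with spectral density $\rho$, the associated reproducing-kernel Hilbert space is $\{g:\int |\widehat g(\omega)|^2/\rho(\omega)\,\mathrm d\omega<\infty\}$, and after the unitary Fourier transform the covariance operator $C_j$ becomes multiplication by $\rho_j$; hence $T$ becomes multiplication by the symbol $\rho_2/\rho_1-1$. Condition 1, which forces $\rho_1(\omega)\asymp\|\omega\|^{-\alpha}$ at high frequency, guarantees that $\rho_1$ has no high-frequency zeros, so this ratio is well defined and the diagonalization is legitimate; it also fixes the smoothness class underlying the Cameron--Martin space.

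Next I would verify (b). Formally, the squared Hilbert--Schmidt norm of a multiplication operator with symbol $\rho_2/\rho_1-1$ equals, up to an absolute constant, $\int(\rho_2/\rho_1-1)^2\,\mathrm d\omega$; this is finite over the tail $\{\|\omega\|>\delta\}$ by condition 2, while over the complementary band $\{\|\omega\|\le\delta\}$ both densities are bounded and bounded below, so that piece contributes only a bounded, essentially finite-rank correction. Hence $T$ is Hilbert--Schmidt. Condition (a) then comes almost for free: a Hilbert--Schmidt $T$ is in particular compact, so $I+T$ is a compact perturbation of the identity, and since the spectral densities are strictly positive $-1$ is not an eigenvalue, whence $I+T$ is boundedly invertible and $\mathrm{Range}(C_1^{1/2})=\mathrm{Range}(C_2^{1/2})$. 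Applying Feldman--Hájek with (a) and (b) gives $K_1\equiv K_2$.

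The hard part is the diagonalization step itself. On the full domain $\RR^p$ a genuine multiplication operator with continuous spectrum is never compact, let alone Hilbert--Schmidt, so the heuristic computation above cannot be read literally for the process on all of $\RR^p$; one must instead work with the restriction of the field to a bounded observation region $D$, whose covariance operator is a trace-class integral operator, and show that the corresponding error operator has Hilbert--Schmidt norm controlled by the same spectral integral $\int_{\|\omega\|>\delta}(\rho_2/\rho_1-1)^2\,\mathrm d\omega$. Making this truncation rigorous, together with the separate accounting for the low-frequency band $\{\|\omega\|\le\delta\}$ (which is precisely why condition 2 only integrates over the tail), is where the real care is needed, and it is exactly the content of Stein's proof that we invoke.
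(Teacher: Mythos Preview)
The paper does not prove this lemma: it is stated as a cited result from \cite{stein1999interpolation} and used as a black-box tool in the proof of Theorem~\ref{thm:iden}. There is therefore no proof in the paper to compare your proposal against.

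As a standalone sketch, your outline via the Feldman--H\'ajek dichotomy is the standard route and is essentially correct in spirit. The identification of $T=C_1^{-1/2}C_2C_1^{-1/2}-I$ with multiplication by $\rho_2/\rho_1-1$ after Fourier diagonalization, and the observation that the Hilbert--Schmidt condition formally becomes the integrability of $(\rho_2/\rho_1-1)^2$, is exactly the heuristic underlying Stein's argument. You are also right to flag the genuine technical issue: on all of $\RR^p$ the multiplication operator has continuous spectrum and is not Hilbert--Schmidt, so the argument must be carried out for the restriction to a bounded domain and the spectral integral shown to control the relevant error operator there. Your handling of the low-frequency band $\{\|\omega\|\le\delta\}$ is a bit glib---``bounded and bounded below'' requires that both $\rho_1$ and $\rho_2$ be continuous and nonvanishing on that compact set, which is implicit but not stated---and the claim that the low-frequency piece is ``essentially finite-rank'' is imprecise, though the conclusion that it contributes a finite amount is correct. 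Overall the proposal is a faithful high-level summary of Stein's approach, but since the paper itself simply invokes the result, no comparison is possible.
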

Lemma \ref{lem:int_test_1} is not directly applicable to our GPlags since our domain is $\RR\times \mathscr{C}$ where $\mathscr{C}$ is the index set of time series. However, a GPlag is equivalent to a multivariate GP with $L=|\mathscr{C}|$ outputs, defined as below:
\begin{definition}\label{def:multivariateGP}
A $L$-variate Gaussian process $f(t)\in\RR^L$ is characterized by its cross-covariance function $\widetilde{K}: \RR\times\RR\to \RR^{L}\times \RR^L: \cov(f(t),f(t'))=\widetilde{K}(t,t')$. 
\end{definition}

\begin{lemma}[Equivalence between GPlags and multivariate GPs \cite{li2021multi}]
A $L$-group GPlag on $\RR\times\mathscr{C}$ is equivalent to a $L$-variate GP on $\RR$, where $K$ and $\widetilde{K}$ are related by the following equation:
$$K((t,l),(t',l')=(\widetilde{K}(t,t'))_{ll'}.$$
\end{lemma}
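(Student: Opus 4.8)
The plan is to exhibit the claimed equivalence as a concrete identification of two descriptions of the same mean-zero Gaussian law, and then to check that the notion of ``valid kernel'' on each side matches under this identification. Since $\mathscr{C}=\{1,\dots,L\}$ is a finite index set, a function on $\RR\times\mathscr{C}$ carries exactly the same data as an $L$-tuple of functions on $\RR$: given a GPlag sample path $f:\RR\times\mathscr{C}\to\RR$, I would define the stacked process $\widetilde f(t)\coloneqq (f(t,1),\dots,f(t,L))^\top\in\RR^L$, and conversely recover $f(t,l)\coloneqq \widetilde f_l(t)$ from an $L$-variate process $\widetilde f$. This map is a bijection at the level of sample paths, so it remains only to verify that it sends the GPlag law to the $L$-variate GP law and preserves validity.

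First I would translate covariances. By the definition of $\widetilde f$ and of $K$ and $\widetilde K$,
$$\cov\bigl(\widetilde f_l(t),\widetilde f_{l'}(t')\bigr)=\cov\bigl(f(t,l),f(t',l')\bigr)=K\bigl((t,l),(t',l')\bigr)=\bigl(\widetilde K(t,t')\bigr)_{ll'},$$
so the two covariance structures agree entrywise under the identification, which is exactly the stated relation.

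The crux of the argument is to show that $K$ is a positive-definite kernel on $\RR\times\mathscr{C}$ if and only if $\widetilde K$ is a matrix-valued positive-definite kernel on $\RR$ (in the sense that $[\widetilde K(t_a,t_b)]_{a,b}$ is positive semidefinite as an $Lm\times Lm$ block matrix for every finite $t_1,\dots,t_m$). To do this I would take an arbitrary finite collection $\{(t_i,l_i)\}_{i=1}^n\subset\RR\times\mathscr{C}$ with scalar coefficients $c_i$, group the points by their distinct time coordinates $t_1,\dots,t_m$, and assemble for each distinct time $t_a$ the vector $v_a\in\RR^L$ whose $l$-th entry collects the coefficient attached to $(t_a,l)$ (zero if that point is absent). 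A direct regrouping then gives
$$\sum_{i,j}c_ic_j\,K\bigl((t_i,l_i),(t_j,l_j)\bigr)=\sum_{a,b}\sum_{l,l'}(v_a)_l(v_b)_{l'}\bigl(\widetilde K(t_a,t_b)\bigr)_{ll'}=\sum_{a,b}v_a^\top\widetilde K(t_a,t_b)\,v_b,$$
so the two quadratic forms are literally the same quantity, and nonnegativity of one for all admissible choices is nonnegativity of the other. The main obstacle is precisely this bookkeeping: I must be careful that ranging over all finite point sets and scalars on the left covers the same configurations as ranging over all finite time sets and all $\RR^L$-vectors on the right, so that no vector $v_a$ is missed. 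This holds because any $v_a\in\RR^L$ is realized by including all $L$ points $(t_a,1),\dots,(t_a,L)$ with the prescribed coefficients, giving a genuine two-way correspondence of the validity conditions.

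Finally I would conclude. A mean-zero Gaussian process is determined by its covariance; since the covariances correspond entrywise and the positive-definiteness conditions match in both directions, the identification $f\leftrightarrow\widetilde f$ is an isomorphism of Gaussian laws, so their finite-dimensional distributions coincide and the GPlag and the $L$-variate GP are in particular equivalent as Gaussian measures. This is exactly what licenses transporting equivalence questions, and the multivariate analogue of the integral test of Lemma \ref{lem:int_test_1}, from GPlags on $\RR\times\mathscr{C}$ to their $L$-variate representations on $\RR$.
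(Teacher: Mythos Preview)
The paper does not actually supply a proof of this lemma: it is stated with a citation to \cite{li2021multi} and then immediately used to transport the equivalence question to the multivariate setting. So there is no paper-side argument to compare against.

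Your argument is the standard and correct one. The only point worth flagging is terminological: in this paper ``equivalent'' is overloaded. The formal Definition earlier in Section~\ref{sec:theory} reserves ``equivalent'' for mutual absolute continuity of Gaussian measures, whereas in this lemma ``equivalent'' is being used informally to mean that the single-output GP on $\RR\times\mathscr{C}$ and the $L$-variate GP on $\RR$ are two presentations of the \emph{same} stochastic process (same finite-dimensional distributions under the relabeling $f(t,l)=\widetilde f_l(t)$). Your proof establishes exactly this identification, together with the two-way correspondence of positive-definiteness conditions, which is what the paper needs in order to invoke the multivariate integral test. Your closing remark that the two are ``in particular equivalent as Gaussian measures'' is true but slightly beside the point: they are not merely mutually absolutely continuous, they are literally the same law viewed through two coordinate systems, and that stronger fact is what justifies carrying over the spectral criterion.
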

The above Lemma connects multivariate time series and multiple time series. 

Let $K$ and $\widetilde{K}$ be two LMat kernels with parameters $(\sigma^2,b,A,S,\nu)$ and $(\widetilde{\sigma}^2,\widetilde{b},\widetilde{A},\widetilde{S},\nu)$. We first simply the notation as $\alpha = \frac{\sigma^2}{\pi(a^2_{ll'}+1)}$, $s=s_l-s_{l'}$, $\beta=b^{2\nu}$, then we have
$$\rho_{ll'}(\omega) = \frac{\alpha \beta}{(b^2+\omega^2)^{\nu+1/2}}e^{  i \omega s}.$$

Observe that $|\rho_{ll'}(\omega)|=\frac{\alpha \beta}{(b^2+\omega^2)^{\nu+1/2}}$ is independent of $l$ and $l'$, and $|\rho_{ll'}(\omega)||\omega|^{2\nu+1}$ is bounded away from $0$ and $\infty$ as $\omega\to\infty$.
Then we introduce a test to determined whether two GPlags are equivalence.
\begin{lemma}[Integral test for multivariate GPs tailored for LMat \citep{bachoc2022asymptotically}]
Let $\rho$ and $\widetilde{\rho}$ be spectral matrices of two LMats $K$ and $\widetilde{K}$, then $K\equiv \widetilde{K}$ if there exists $\delta>0$ such that
$$\int_{|\omega|>\delta}\left|\frac{\rho_{ll'}(\omega)}{\widetilde{\rho}_{ll'}(\omega)}-1\right|^2 \mathrm{d}\omega <\infty,~\forall l,l'=1,\cdots,L.$$

\end{lemma}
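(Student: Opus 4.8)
The plan is to reduce the claim to the general operator criterion for equivalence of multivariate stationary Gaussian measures and then exploit the rigid product structure of the LMat spectral matrices. By the lemma identifying an $L$-group GPlag with an $L$-variate GP, it is enough to compare two $L$-variate stationary Gaussian measures on a bounded interval. By the Feldman--H\'ajek dichotomy these are either equivalent or mutually singular, and (the multivariate analogue of \Cref{lem:int_test_1}, in the form of \citep{bachoc2022asymptotically}) equivalence is implied by: (a) a common polynomial tail for the spectral matrices, and (b) finiteness of the Hilbert--Schmidt integral $\int_{|\omega|>\delta}\lVert\widetilde F(\omega)^{-1/2}(F(\omega)-\widetilde F(\omega))\widetilde F(\omega)^{-1/2}\rVert_{\mathrm{HS}}^2\,d\omega$. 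I place $\widetilde F$ in the ``metric'' position so that the matrix ratio lines up with the entrywise ratio $R_{ll'}:=\rho_{ll'}/\widetilde\rho_{ll'}$ appearing in the statement; this is legitimate since equivalence is symmetric and the two spectral matrices will be seen to have comparable tails. Only sufficiency is required.

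The whole argument runs on the factorization supplied by \Cref{lem:spec}. With $g(\omega)=b^{2\nu}/(b^2+\omega^2)^{\nu+1/2}$, $D(\omega)=\diag(e^{i\omega s_1},\dots,e^{i\omega s_L})$ and the constant Hermitian matrix $C=(\sigma^2/(a_{ll'}^2+1)^{\nu+1/2})_{l,l'}$, one has
\[
F(\omega)=g(\omega)\,D(\omega)\,C\,D(\omega)^{*},\qquad \widetilde F(\omega)=\widetilde g(\omega)\,\widetilde D(\omega)\,\widetilde C\,\widetilde D(\omega)^{*}.
\]
Because $D(\omega)$ is unitary and $g(\omega)>0$, validity of the kernel forces $C$ (and $\widetilde C$) positive definite, and the eigenvalues of $F(\omega)$ are $g(\omega)\lambda_k(C)$. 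Since $g(\omega)|\omega|^{2\nu+1}\to b^{2\nu}$, the eigenvalues of $|\omega|^{2\nu+1}F(\omega)$ stay bounded away from $0$ and $\infty$ on $\{|\omega|>\delta\}$; this is tail condition (a) and also gives $\omega$-uniform control of $\lVert\widetilde F(\omega)^{-1/2}\rVert$ and of the condition numbers of $C,\widetilde C$.

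The reduction of (b) to the entrywise hypothesis proceeds in two steps. First, a direct computation gives $R_{ll'}(\omega)=(g/\widetilde g)(C_{ll'}/\widetilde C_{ll'})\,e^{i\omega\Delta_{ll'}}$ with $\Delta_{ll'}=(s_l-\widetilde s_l)-(s_{l'}-\widetilde s_{l'})$; writing $r(\omega)=(g/\widetilde g)(C_{ll'}/\widetilde C_{ll'})>0$ we have $|R_{ll'}-1|^2=r^2+1-2r\cos(\omega\Delta_{ll'})$. If $\Delta_{ll'}\neq0$ then $r(\omega)$ tends to a positive constant, so $\int_{|\omega|>\delta}(r^2+1)\,d\omega=\infty$, whereas the oscillatory term $\int r(\omega)\cos(\omega\Delta_{ll'})\,d\omega$ converges by a Riemann--Lebesgue / integration-by-parts argument; hence finiteness of the off-diagonal integrals forces $\Delta_{ll'}=0$, and with the baseline $s_1=\widetilde s_1=0$ this yields $S=\widetilde S$, i.e. $D(\omega)\equiv\widetilde D(\omega)$. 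Second, with the phases matched the product collapses to $\widetilde F^{-1/2}F\widetilde F^{-1/2}=(g/\widetilde g)\,\widetilde C^{-1/2}C\,\widetilde C^{-1/2}$, a constant matrix scaled by $g/\widetilde g$. The entrywise hypothesis then reads $\int_{|\omega|>\delta}|(g/\widetilde g)(C_{ll'}/\widetilde C_{ll'})-1|^2\,d\omega<\infty$ for every $l,l'$; since $g/\widetilde g\to b^{2\nu}/\widetilde b^{2\nu}=:\kappa$, it can hold only if $C_{ll'}/\widetilde C_{ll'}=\kappa^{-1}$ for all $l,l'$, i.e. $C=\kappa^{-1}\widetilde C$. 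But then $\widetilde C^{-1/2}C\widetilde C^{-1/2}=\kappa^{-1}I$, so the integrand of (b) equals $\lVert(g/\widetilde g)\kappa^{-1}I-I\rVert_{\mathrm{HS}}^2=L\,|(g/\widetilde g)\kappa^{-1}-1|^2$, and since $(g/\widetilde g)\kappa^{-1}=\{(\widetilde b^2+\omega^2)/(b^2+\omega^2)\}^{\nu+1/2}=1+O(\omega^{-2})$ this is $O(\omega^{-4})$ and integrable. Thus (b) holds and $K\equiv\widetilde K$.

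The delicate point is the passage from the Hadamard (entrywise) ratio in the hypothesis to the matrix-algebraic quantity in criterion (b): these are governed by genuinely different operations---entrywise division versus conjugation by $\widetilde C^{-1/2}$---and they reconcile only after the diagonal phase matrices are shown to coincide. Establishing $S=\widetilde S$ via the oscillation estimate is therefore the keystone; once $U(\omega):=\widetilde D(\omega)^{*}D(\omega)\equiv I$, the shared scalar profile $g/\widetilde g$ makes the entrywise and matrix conditions collapse onto the same proportionality $C=\kappa^{-1}\widetilde C$, and the remaining verification is scalar. I would take care that the oscillatory integral is only conditionally convergent, so the argument must isolate the non-decaying part $r^2+1$ rather than bound $|R_{ll'}-1|^2$ termwise; this is also the precise mechanism by which the time-lag parameter $S$ becomes identifiable.
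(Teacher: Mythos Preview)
The paper does not supply its own proof of this lemma; it is quoted from \citep{bachoc2022asymptotically} as a black box and then invoked in the proof of \Cref{thm:iden}. What you have written is therefore an independent derivation of the ``tailored for LMat'' statement from the general matrix Hilbert--Schmidt criterion, and your argument is correct. The factorization $F(\omega)=g(\omega)D(\omega)CD(\omega)^{*}$ is exactly what \Cref{lem:spec} gives, the tail condition (a) is immediate from it, and your two-step reduction---first forcing $S=\widetilde S$ via the oscillation in the off-diagonal ratios, then $C=\kappa^{-1}\widetilde C$ from the residual scalar condition---is sound. One small wording issue: the oscillatory integral $\int_\delta^M r(\omega)\cos(\omega\Delta_{ll'})\,d\omega$ is bounded but does not actually converge (the constant part $r_\infty\int\cos$ oscillates); what matters, and what you use, is that it stays bounded while $\int_\delta^M(r^2+1)\,d\omega\to\infty$, so the full nonnegative integrand has divergent integral. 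An equivalent cleaner phrasing is that the period-average of $|R_{ll'}-1|^2$ tends to $r_\infty^2+1>0$.

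The structural point worth flagging is that your proof of the lemma already contains the forward half of \Cref{thm:iden}: in order to pass from the entrywise hypothesis to the operator criterion you are forced to extract $S=\widetilde S$ and $C=\kappa^{-1}\widetilde C$, which are precisely the parameter identities the paper derives \emph{from} the lemma in its proof of \Cref{thm:iden}. In the paper's organization the lemma is an input and the identifiability theorem reads off $(\sigma^2 b^{2\nu},A,S)$ from the entrywise integral; in yours the two arguments are fused, and the lemma is proved by anticipating its consequences. Both are valid; yours makes transparent that for LMat the entrywise and matrix tests collapse onto the same parameter constraints, at the cost of the lemma no longer being modular.

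One minor technical gap: forming $\widetilde F(\omega)^{-1/2}$ and invoking the Hilbert--Schmidt criterion require $\widetilde C$ to be strictly positive definite, not merely PSD. Validity of the kernel gives only $\widetilde C\succeq 0$; strict definiteness is a non-degeneracy assumption (automatic for $L=2$ with $a>0$, but for general $L$ under the metric constraints of \Cref{thm:multiGPlag} it should be stated explicitly).
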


Now we can prove \Cref{thm:iden}. 
\begin{proof}[Proof of \Cref{thm:iden}]
Observe that
\begin{align*}
	\left|\frac{\rho_{ll'}(\omega)}{\widetilde{\rho}_{ll'}(\omega)}-1\right|^2 &= \left|\frac{\frac{\alpha b^{2\nu}}{b^2+\omega^2}e^{  i \omega s}}{\frac{\widetilde{\alpha} \widetilde{b}^{2\nu}}{\widetilde{b}^2+\omega^2}e^{  i \omega \widetilde{s}}}-1\right|^2\\
	& = \left|e^{  i \omega (s-\widetilde{s})}\frac{\alpha b^{2\nu} (\widetilde{b}^2+\omega^2)^{\nu+1/2}}{\widetilde{\alpha}\widetilde{b}^{2\nu}(b^2+\omega^2)^{\nu+1/2}}-1\right|^2\\
	& = \left|\frac{e^{  i \omega (s-\widetilde{s})}\alpha b^{2\nu} (\widetilde{b}^2+\omega^2)^{\nu+1/2}-\widetilde{\alpha}\widetilde{b}^{2\nu}(b^2+\omega^2)^{\nu+1/2}}{\widetilde{\alpha}\widetilde{b}^{2\nu}(b^2+\omega^2)^{\nu+1/2}}\right|^2\\
	& \leq \left|\frac{e^{  i \omega (s-\widetilde{s})}\alpha b^{2\nu} (\widetilde{b}^2+\omega^2)^{\nu+1/2}-\widetilde{\alpha}\widetilde{b}^{2\nu}(b^2+\omega^2)^{\nu+1/2}}{\widetilde{\alpha}\widetilde{b}^{2\nu}\omega^{2\nu+1}}\right|^2\\
	&  = \left|e^{  i \omega (s-\widetilde{s})}\frac{\alpha b^{2\nu}}{\widetilde{\alpha}\widetilde{b}^{2\nu}} \left(1+\left(\nu+\frac{1}{2}\right)\widetilde{b}^2/\omega^2\right)-\left(1+\left(\nu+\frac{1}{2}\right)b^2/\omega^2\right)\right|^2\\
	& =\left|e^{  i \omega (s-\widetilde{s})}\frac{\alpha b^{2\nu}}{\widetilde{\alpha}\widetilde{b}^{2\nu}}-1+O(\omega^{-2})\right|^2
\end{align*}
As a result,
$\int_{\delta}^\infty \left|\frac{\rho_{ll'}(\omega)}{\widetilde{\rho}_{ll'}(\omega)}-1\right|^2<\infty$ if and only if 
\begin{equation}\label{eqn:E_micro}
	e^{  i \omega (s-\widetilde{s})}\frac{\alpha b^{2\nu}}{\widetilde{\alpha}\widetilde{b}^{2\nu}}=1.
\end{equation}

\changedreviewerone{Since Equation \eqref{eqn:E_micro} holds for any $l,l',\omega$, we set certain parameters to special values to isolate conditions on certain parameters.} First we let $l=l'$ so that $s =s_l-s_{l'}=0$ and $a_{ll'}=0$. As a result, Equation \eqref{eqn:E_micro} becomes 
\begin{equation}\label{eqn:E_sigma2b}
	\sigma^2b^{2\nu} = \widetilde{\sigma}^2\widetilde{b}^{2\nu}.
\end{equation}
Then let $\omega = 0$, and plug Equation \eqref{eqn:E_sigma2b} into Equation \eqref{eqn:E_micro}, we have
\begin{equation}\label{eqn:E_a}
	\frac{1}{a_{ll'}^2+1}= \frac{1}{\widetilde{a}_{ll'}^2+1},~\text{i.e.}~ a_{ll'}=\widetilde{a}_{ll'}.
\end{equation}
Finally, plug Equation \eqref{eqn:E_sigma2b} and \eqref{eqn:E_a} into Equation \eqref{eqn:E_micro}, we have $e^{  i \omega (s-\widetilde{s})}=1$, that is, $s_l-s_{l'}=\widetilde{s}_l-\widetilde{s}_{l'}$. Recall that $s_1 = \widetilde{s}_1=0$ by the definition, so we let $l'=1$ to get
\begin{equation}\label{eqn:E_s}
	s_l=\widetilde{s}_l,~l=1,\cdots,L.
\end{equation}

Combination Equations \eqref{eqn:E_sigma2b}, \eqref{eqn:E_a}, and \eqref{eqn:E_s}, we conclude that
$$K\equiv \widetilde{K} \Longleftrightarrow (\sigma^2b,A,S)=(\widetilde{\sigma}^2\widetilde{b},\widetilde{A},\widetilde{S}).$$

To show finite nature of the integral necessitating the condition of \eqref{eqn:E_micro}, we assume $L=2$ and $\nu=1/2$, i.e., there are only two time series with LExp kernel. In this situation, it suffices to show that if $(\sigma^2b,a,s)\neq (\tilde{\sigma}^2\tilde{b},\tilde{a},\tilde{s})$, then $K(\cdot;\sigma^2,b,a,s)\not\equiv K(\cdot;\tilde{\sigma}^2,\tilde{b},\tilde{a},\tilde{s})$. 

First, we assume that $a=\tilde{a}$ and $s=\tilde{s}$, but $\sigma^2b\neq \tilde{\sigma}^2\tilde{b}$. Let $\{\psi_j\}$ be an orthonormal basis of the Hilbert space generated by $K(\cdot;\sigma_0^2,b,a,s)$. Let $\sigma_0^2 = \tilde{\sigma}^2\tilde{b}/b$, so we have $\sigma_0^2b=\tilde{\sigma}^2\tilde{b}$ and $K(\cdot;\sigma_0^2,b,a,s)\equiv K(\cdot;\tilde{\sigma}^2,\tilde{b},a,s) $ by the first half of the proof of \Cref{thm:iden}. So it suffices to show $K(\cdot;\sigma^2,b,a,s)\not\equiv K(\cdot;\sigma_0^2,b,a,s)$, which differs only by a multiplicative constant, $\eta=\sigma^2/\sigma_0^2=\sigma^2b/\tilde{\sigma}^2\tilde{b}\neq 1$. As a result, $\sum_{j,k=1}(E_{K(\cdot;\sigma^2,b,a,s)}\psi_j\psi_k-E_{K(\cdot;\sigma_0^2,b,a,s)}\psi_j\psi_k)^2=\sum_{j=1}^\infty (\eta-1)^2=\infty$. By Ibragimov and Rozanov (2012, p.72), we conclude that $K(\cdot;\sigma^2,b,a,s)\not\equiv K(\cdot;\sigma_0^2,b,a,s)$.

Similarly, when $a\neq \tilde{a}$, we can prove the same result with $\eta = (\tilde{a}^2+1)/(a^2+1)$; while when $s\neq \tilde{s}$, the corresponding $\eta = e^{-b\tilde{s}}/e^{-bs}$. 
\end{proof}

\subsection{Detailed algorithm}\label{apdx:algorithm}

\Cref{alg:GPlag} and \Cref{alg:GPlag-multi} illustrates how to calculate the MLE for pairwise and three time series separately. 

\begin{algorithm}[!ht]
\caption{MLE process of GPlag optimization on pairwise time series}
\label{alg:GPlag}
\begin{algorithmic}
	\STATE {\bfseries Input:} $\{(t_i,c_i,y_i)\}_{i=1}^n, c_i\in \{1, 2\}, n = n_1+n_2$
	\STATE {\bfseries Output:} Parameter estimation: $a$, $b$, $s$, $\sigma^2$, $\tau^2$
	\STATE{\bfseries Step 1:} Initialization
	\STATE{\bfseries Step 2:} $\Sigma_{ij}=K((t_i,c_i),(t_j,c_j)) = \frac{\sigma^2}{
		(a^2\mathbf{1}_{\{c_i\neq c_j\}}+1)^{1/2}}e^{-\frac{b||t-t'-s||^2}{a^2\mathbf{1}_{\{c_i\neq c_j\}}+1}} + \tau^2\mathrm{I}_n$ 
	\STATE $~~~~~~~~~~~\log p(y | b, a, s, \sigma^2, \tau^2) = - \frac{1}{2}\log|\Sigma| - \frac{n}{2} \log(Y^\top \Sigma^{-1}Y), \Sigma_{ij} = K((t_i,c_i),(t_j,c_j))$ 
	\STATE{\bfseries Step 3:} Maximize the above log-likelihood by user selected optimizer such as Adam or L-BFGS-B.
	
\end{algorithmic}
\end{algorithm}

\begin{algorithm}[!ht]
\caption{MLE process of GPlag optimization on three time series}
\label{alg:GPlag-multi}
\begin{algorithmic}
	\STATE {\bfseries Input:} $\{(t_i,c_i,y_i)\}_{i=1}^n, c_i\in \{1,\dots, 3\}, n = n_1+n_2+n_3$
	\STATE {\bfseries Output:} Parameter estimation: $\{a_{ll'}\}_{l,l'=1,2,3}, b, s_1, s_2, s_3, \sigma^2, \tau^2$
	\STATE{\bfseries Step 1:} Initialization
	\STATE{\bfseries Step 2:} $\Sigma_{ij}=K((t_i,c_i),(t_j,c_j)) = \frac{\sigma^2}{
		(a_{c_ic_j}^2+1)^{1/2}}e^{-\frac{b||t_i-t_j-(s_{c_i}-s_{c_j})||^2}{a_{c_ic_j}^2+1}} + \tau^2\mathrm{I}_n$ 
	\STATE $~~~~~~~~~~~\log p(y | b, \mathbf{a}, \mathbf{s}, \sigma^2, \tau^2) = - \frac{1}{2}\log|\Sigma| - \frac{n}{2} \log(Y^\top \Sigma^{-1} Y), \Sigma_{ij} = K((t_i,c_i),(t_j,c_j))$ 
	\STATE{\bfseries Step 3:} Maximize the above log-likelihood under the constraints $a_{ll'}=0\Longleftrightarrow l=l',~a_{ll'}=a_{l'l},~a_{ll'}+a_{l'k}\geq a_{lk}$ for any $l,l',k=1,2,3$ and $s_1=0$, by user selected optimizer such as Adam or L-BFGS-B.
\end{algorithmic}
\end{algorithm}

\Cref{alg:GPlag-bayesian} illustrates the realization of the Bayesian inference procedure.
\begin{algorithm}[!ht]
\caption{Bayesian inference procedure of GPlag optimization on pairwise time series}
\label{alg:GPlag-bayesian}
\begin{algorithmic}
	\STATE {\bfseries Input:} $\{(t_i,c_i,y_i)\}_{i=1}^n, c_i\in \{1, 2\}, n = n_1+n_2$
	\STATE {\bfseries Output:} Parameter estimation: $a$, $b$, $s$, $\sigma^2$, $\tau^2$
	\STATE{\bfseries Step 1:} Prior distribution: $$\pi(a,b, s,\sigma^2,\tau^2) = IG(a|\alpha_a, \alpha'_a)IG(b|\alpha_b, \alpha'_b)N(s|\mu_s, V_s)IG(\sigma^2|\alpha_{\sigma}, \alpha'_{\sigma})IG(\tau^2|\alpha_{\tau}, \alpha'_{\tau}),$$
	\STATE{\bfseries Step 2:} Posterior distribution: $$\pi(a,b,s,\sigma^2,\tau^2|\{(t_i,c_i,y_i)\}_{i=1}^n)\propto \pi(a,b, s,\sigma^2,\tau^2)p(Y|\{(t_i,c_i,y_i)\}_{i=1}^n, a,b,s,\sigma^2,\tau^2),$$
	\STATE $~~~~~~~~~~~\log p(y | b, a, s, \mu) = - \frac{1}{2}\log|\Sigma| - \frac{n}{2} \log(Y^\top \Sigma^{-1}Y), \Sigma_{ij} = K((t_i,c_i),(t_j,c_j))$ 
	\STATE{\bfseries Step 3:} Sampling: Run No-U-Turn Sampler(NUTS) algorithm to generate samples from the above posterior distribution to achieve estimator uncertainty.
	
\end{algorithmic}
\end{algorithm}

where $IG$ denotes the inverse gamma distribution and $N$ denotes the normal distribution, $\alpha_a$, $\alpha'_a$, $\alpha_b$, $\alpha_b',\mu_s,V_s,\alpha_\tau,\alpha'_\tau$ are hyper-parameters in the prior. 

\section{Additional experimental details} 
\label{apdx:experiments}
\subsection{Simulation}\label{apdx:simulation}

\textbf{Parameter Estimation and Inference}
The variable $T$ with various sample size was generated within the range (-50, 50), and a random variation was added by sampling from a uniform distribution with a range of $\mathrm{unif}(-\frac{1}{4},\frac{1}{4})$. Three kernels were all served as a covariance function to generate pairwise time series from Gaussian process(GP) with $b$, $a$, $s$ set to $0.3, 1, 2$. When fitting GPlag, the parameters $b$, $a$, $s$ were initialized with 1, 1, 1, respectively. For both GPlag and Lead-Lag methods, the upper bound for the time lag was set to 4. The maximum likelihood estimation (MLE) was obtained using the L-BFGS-B algorithm. \changedreviewertwo{We plot the MLE of $b$ and $\sigma^2$, as well as $\sigma^2b$ (\Cref{fig:bsigma2}). The value of 
$\sigma^2b$ closely matches its true value, aligning with the theoretical predictions discussed in Theorem  \Cref{thm:iden}. Additionally, as $n$ increases, the variance decreases, demonstrating the consistency of the MLE.}


\changedreviewertwo{Furthermore, to evaluate the method's performance across different lengthscales $b$, and to examine their influence on the estimates of $a$ and $s$, we simulate $b=0.3, 5, 10$ for $n=50,100,200,300$ using the LExp kernel. The results, summarized in the \Cref{fig:variousb}, indicate that the value of $b$ does influence model optimization. Nonetheless, MLE of both $a$ and $a$ converge to their true values as the sample size increases.}

\begin{figure}[htbp]
\centering
\vspace{-10pt} 
\includegraphics[width=0.8\textwidth]{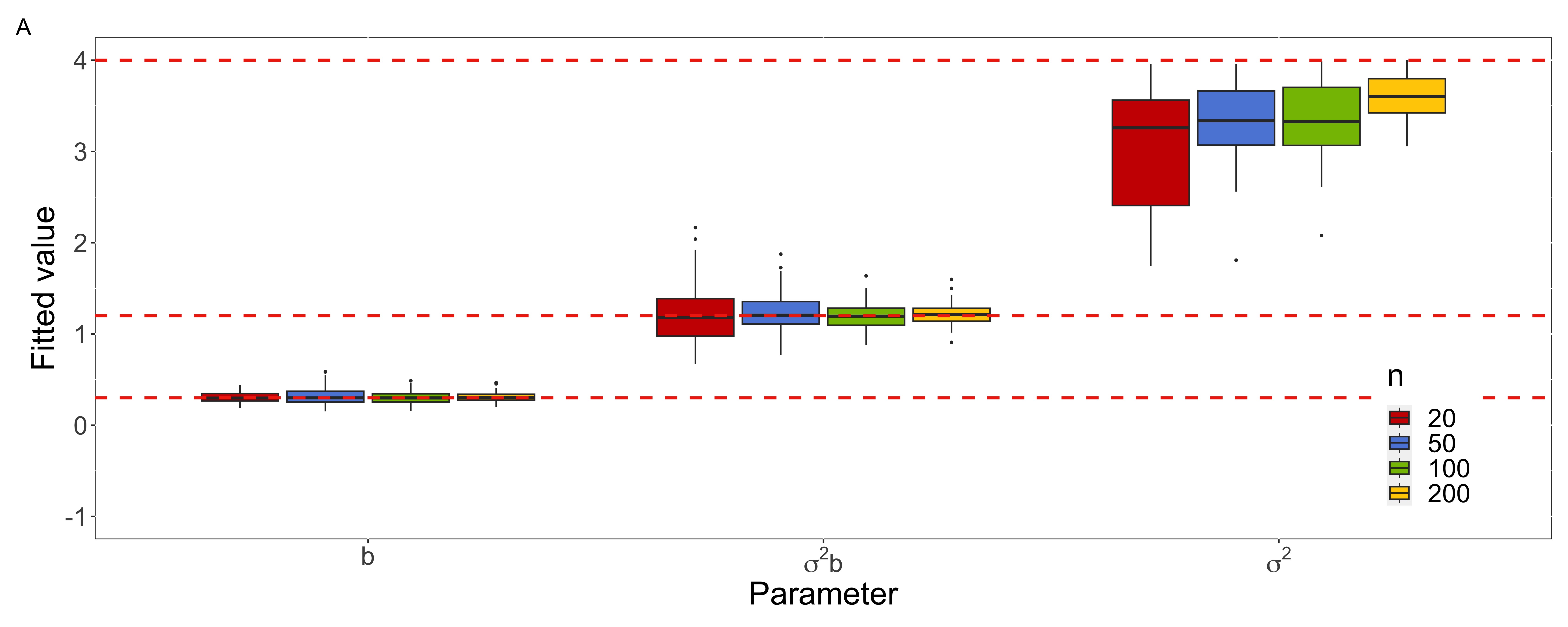}
\caption{ MLEs of GPlag parameters $b$ and $\sigma^2$ and $\sigma^2b$ in the LExp kernel with
	different n. The true parameter values were represented by the dashed red horizontal lines. All
	boxplots were from 100 replicates}
\vspace{-10pt} 
\label{fig:bsigma2}
\end{figure}

\begin{figure}[htbp]
\centering
\vspace{-10pt} 
\includegraphics[width=0.8\textwidth]{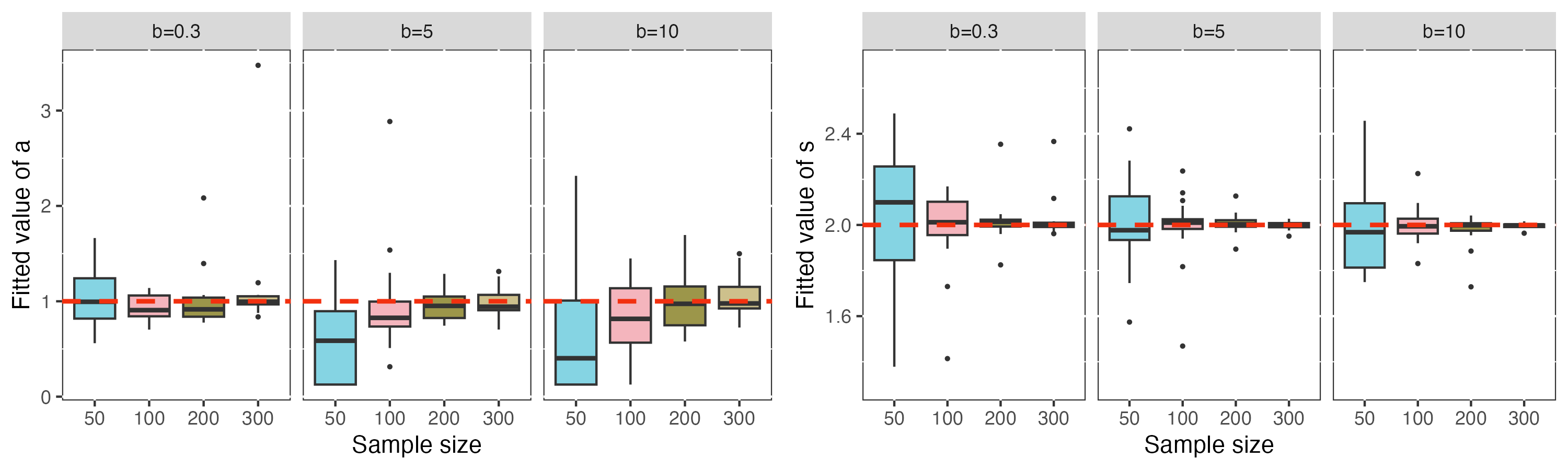}
\caption{ MLE of $a$ and $s$ in GPlag with data generated from LExp kernel with different lengthscales $b$ and sample size. The true parameter values were represented by the dashed red horizontal lines. All boxplots were from $20$ replicates}
\vspace{-10pt} 
\label{fig:variousb}
\end{figure}

\textbf{Prediction}
For the process that time series generated from the kernel LExp in \Cref{eqn:LExp}, the variable $T$ with $n_l = 50$ was generated within the range (-25, 25), and a random variation was added by sampling from a uniform distribution with a range of $\mathrm{unif}(-\frac{1}{4},\frac{1}{4})$. For the linear process with non-Gaussian noise, the variable $T$ with $n_l = 100$ was generated within the range (0, 100) with $s = 20$. 50\% data were randomly sampled as training sets in one time series, and the corresponding positions in the other time series were also treated as training sets. The remaining 50\% data were treated as testing sets. The initial, lower, and upper bound for the time lag was set to 1, -1, 5, respectively. The maximum likelihood estimation (MLE) was obtained using the L-BFGS-B algorithm. 

\begin{figure}[h]
\centering
\vspace{-10pt} 
\includegraphics[width=0.3\textwidth]{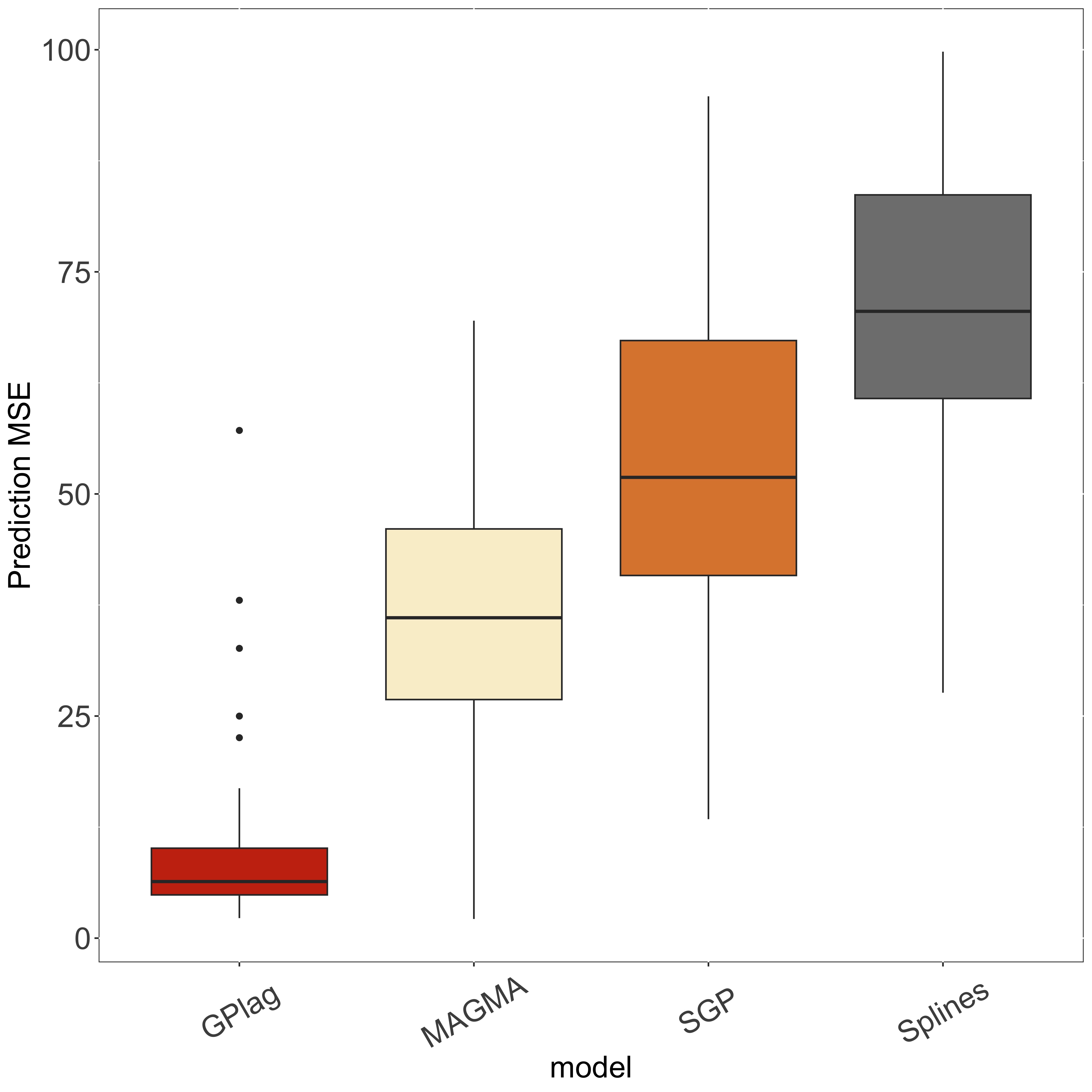}
\caption{Prediction accuracy comparing with baselines for pair time series generated from a linear process $y =  2t + 3 + 5\epsilon_1$. $\epsilon_1$ was sampled from t-distributed noise. Each time series has $n=100$ observations, whereas the time shift =20 between time series. Replicates 100 times.}
\vspace{-10pt} 
\label{fig:R2Q2}
\end{figure}

\textbf{Clustering or Ranking Time Series based on dissimilarity to target time series}
The variable $T$ with $n_l = 50$ for 10 time series (target time series + other 9 time series) was generated  regularly from (-2, 2). When fitting GPlag, the parameters $b$, $a$, $s$ were initialized with 1, 1, 0, respectively. The lower and upper bound for the time lag was set to -1 and 4, respectively. The maximum likelihood estimation (MLE) was obtained using the L-BFGS-B algorithm. \changedreviewertwo{We also replicated this experiment using LRBF and LMat kernels. For both kernels, the ARI and NMI scores were 1 (same as LExp kernel), and the inferred values of $a$ were again clearly separated into three distinct clusters \Cref{fig:cluster_rbf_mat}.}

\begin{review2_env}
\begin{figure}[h]
	\centering
	\vspace{-10pt} 
	\includegraphics[width=\textwidth]{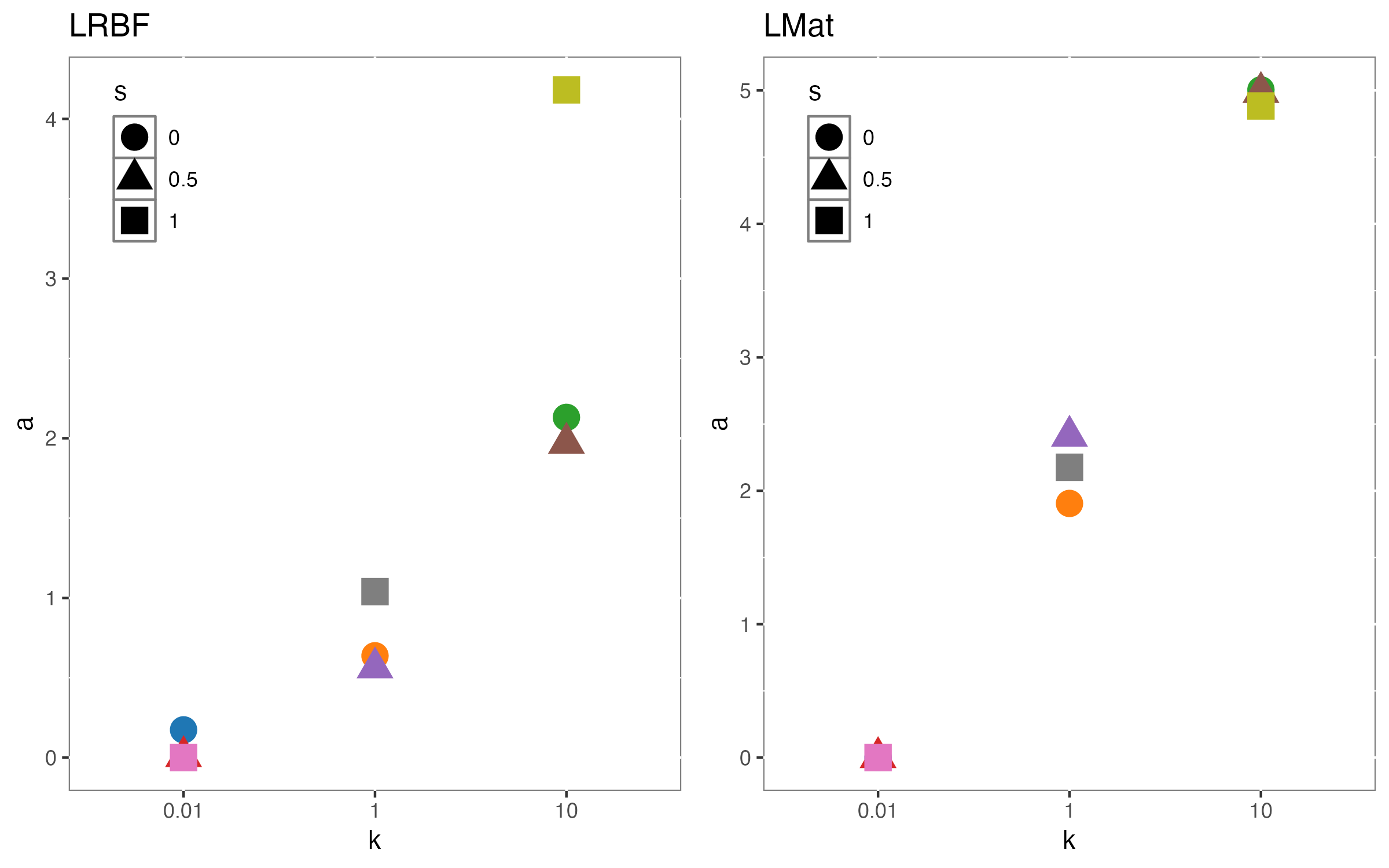}
	\caption{Inferred $a$ with LRBF and LMat kernels in the clustering experiment.}
	\vspace{-10pt} 
	\label{fig:cluster_rbf_mat}
\end{figure}
\end{review2_env}

\begin{review3_env}

\textbf{Visual example of kernels}

To better help the reader understand and choose the appropriate kernel, here we simulate data from LExp, LMat, and LRBF with various parameter. The simulated sample paths are shown in the following figures. The figures clearly demonstrate that a smaller $b$ indicates higher similarity between nearby points, while a smaller $a$ reflects greater similarity between the two time series.

\begin{figure}[h]
	\centering
	\vspace{-10pt} 
	\includegraphics[width=\textwidth]{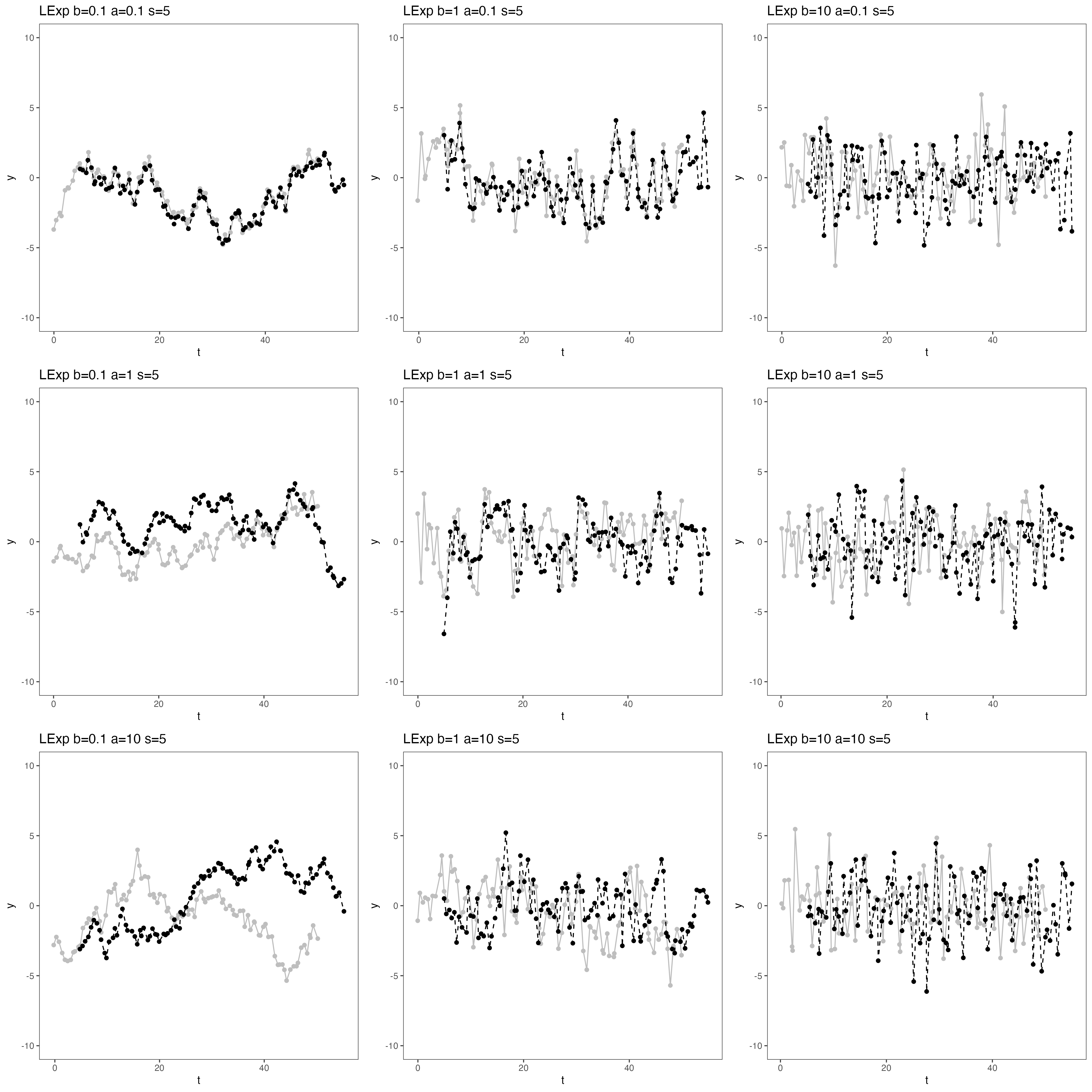}
	\caption{Simulated sample path for LExp kernel.}
	\vspace{-10pt} 
	\label{fig:sampp_exp}
\end{figure}

\begin{figure}[h]
	\centering
	\vspace{-10pt} 
	\includegraphics[width=\textwidth]{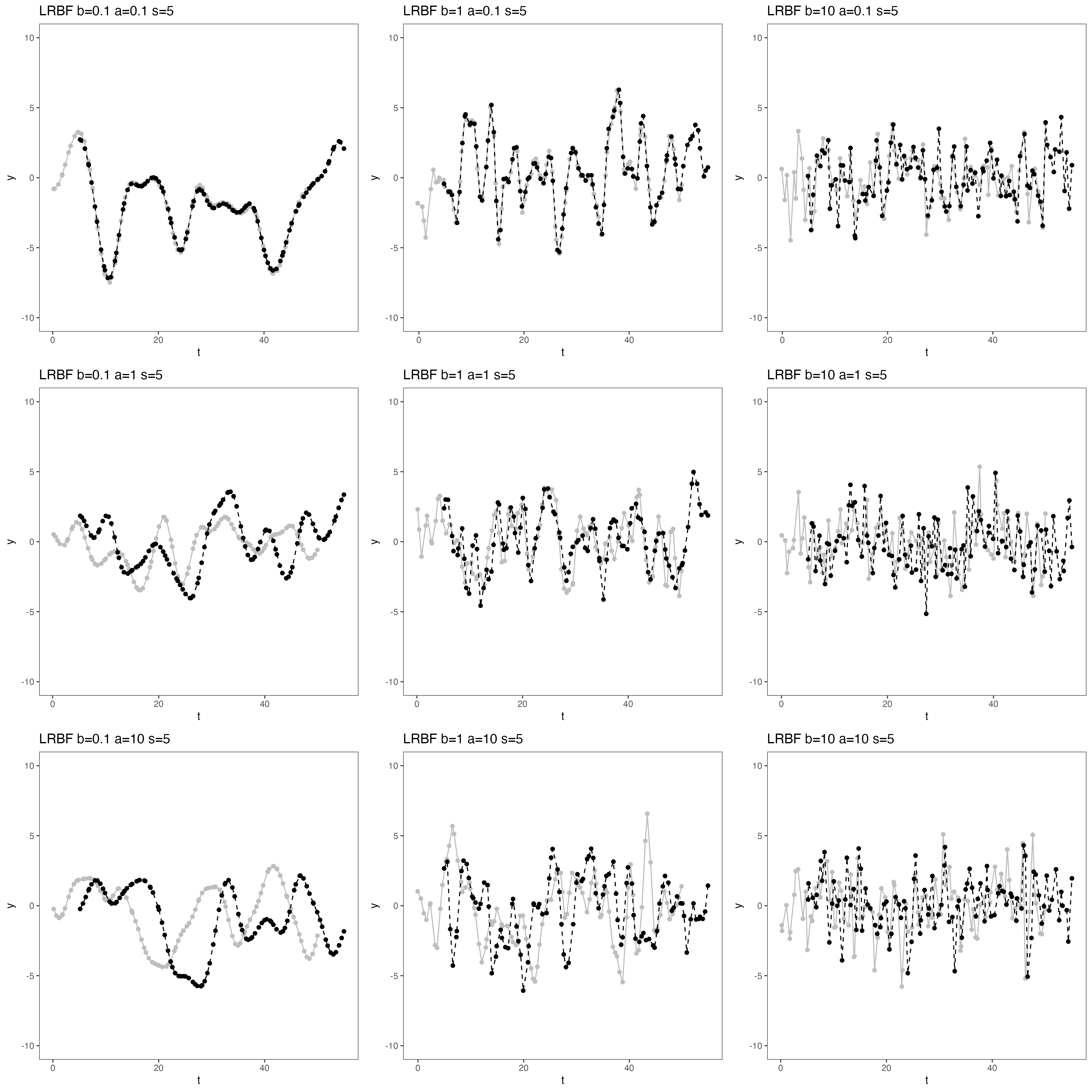}
	\caption{Simulated sample path for LRBF kernel.}
	\vspace{-10pt} 
	\label{fig:sampp_rbf}
\end{figure}

\begin{figure}[h]
	\centering
	\vspace{-10pt} 
	\includegraphics[width=\textwidth]{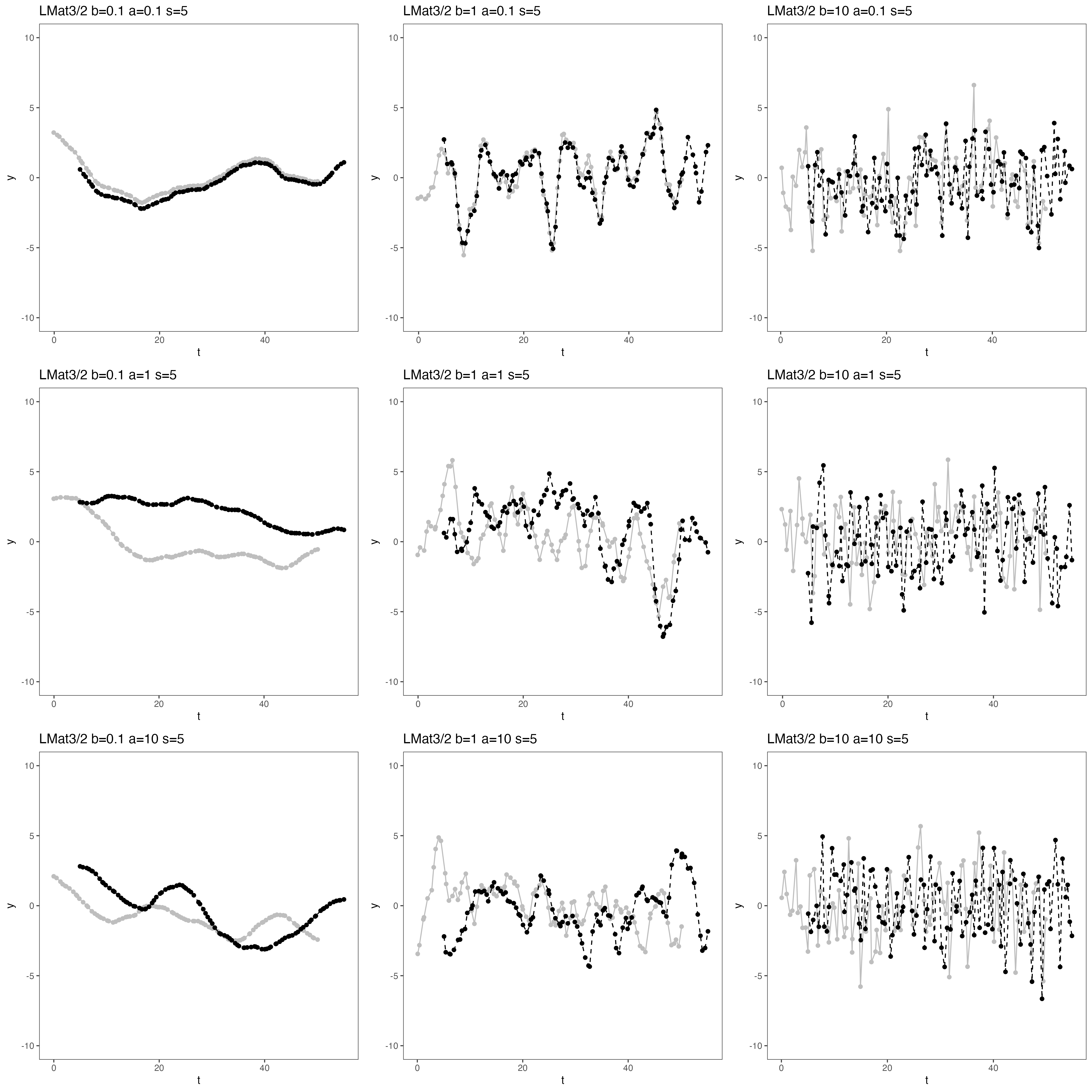}
	\caption{Simulated sample path for LMat ($\nu=\frac{3}{2}$) kernel.}
	\vspace{-10pt} 
	\label{fig:sampp_matern}
\end{figure}

\end{review3_env}

\subsection{Real-world applications}
In order to understand how enhancers regulate gene transcription, 
\citet{reed2022temporal} evaluated human macrophage activation after stimulating macrophage with LPS and interferon-gamma (IFN$_\gamma$). The data were collected at seven irregularly-spaced time points (0, 0.5, 1, 1.5, 2, 4, 6 h). At each time point, the 3D chromatin structure was profiled using \textit{in situ} Hi-C, putative enhancer activity using chromatin immunoprecipitation sequencing (ChIP-seq) targeting histone 3 lysine 27 acetylation (H3K27ac), chromatin accessibility using assay for transposase-accessible chromatin using sequencing (ATAC-seq), and gene expression using RNA sequencing (RNA-seq). Hi-C scale was indicated in KR-normalized counts, H3K27ac and RNA-seq using variance stabilizing transformation (VST) from the \emph{DESeq2} package \citep{love2014moderated}. 

We characterize synchronization between activity at enhancers and transcription at promoters allowing for time lags. 
Time lags help to better model potential causal relationships between chromatin accessibility and gene transcription.
Usually, changes in acetylation at distal enhancers precede changes in gene
expression \citep{reed2022temporal}, so we assume the time lag parameter $s$ is positive and set its upper bound to 2h. In total, we have 22,813 enhancer and gene pairs. Since we believe stimulation would activate enhancer and gene changes, we removed the gene and enhancers whose range of normalized expression is smaller than 1. 4,776 non-static pairs remained for analysis.
For Hi-C value, expected counts are determined using distance decay from the diagonal and the total read depth of the bins, so the resulting value is independent of distance. 

The GPlag model employed the LRBF kernel in \Cref{eqn:LRBF}, with time initialization obtained from TLCC and parameter $b$ initialization from GP modeling on gene expression. The optimization process utilized the L-BFGS-B algorithm. 
\end{document}